\documentclass[10pt]{article}
\usepackage{authblk}
\usepackage{amsfonts}
\usepackage{float}
\usepackage{hyperref}
\usepackage{caption}
\usepackage{multirow}
\usepackage{subcaption}

\usepackage[switch, modulo]{lineno}

\usepackage{amssymb,latexsym,amsmath,amscd,slashed,mathtools,mathrsfs,bm,amsthm}
\usepackage{setspace,scalerel,color,graphicx,enumerate}

\newcommand{\res}{\mathrm{Res}}

\theoremstyle{plain}
\newtheorem{theorem}{Theorem}
\newtheorem{corollary}[theorem]{Corollary}

\newtheorem{prop}[theorem]{Proposition}

\theoremstyle{definition}

\DeclareMathOperator{\tr}{Tr}
\DeclareMathOperator{\ex}{\mathbb{E}}

\theoremstyle{plain}

\makeatletter

\date{September 17, 2026}
\makeatother

\title{Asymmetric phase transitions in random noncommutative geometries}

\author{
Benedek Bukor$^{1}$,
Masoud Khalkhali$^{2}$,
Samuel Kov\'{a}\v{c}ik$^{1}$,
Adam Kubiš$^{1}$,
Katarína Magdolenová$^{1}$,
Nathan Pagliaroli$^{3}$,
Juraj Tekel$^{1}$ \thanks{Email: \emph{Email addresses}:benedek.bukor@fmph.uniba.sk, masoud@uwo.ca, samuel.kovacik@fmph.uniba.sk, 
\newline kubis31@uniba.sk, magdolenova14@uniba.sk, npagliar@uwaterloo.ca, juraj.tekel@fmph.uniba.sk}
\\
$^{1}$Faculty of Mathematics, Physics and Informatics,
Comenius University in Bratislava Bratislava, Slovakia\\
$^{1}$Department of Mathematics, The University of Western Ontario, London, ON, Canada\\
$^{3}$Department of Combinatorics and Optimization, The University of Waterloo, Kitchener, ON, Canada} 

\begin{document}
	\maketitle
	\begin{abstract}
 In this paper, we study the asymmetric phases of the quartic type $(0,1)$ and $(1,0)$ Dirac ensembles via three approaches: the Riemann-Hilbert approach, bootstrapping with positivity, and Hamiltonian Monte Carlo (HMC) simulations. The focus of this work is on asymmetric solutions to the Schwinger-Dyson and saddle point equations of these models, whose solution spaces prove deeply intricate. Via the Riemann-Hilbert approach, we are able to give explicit formulae for the eigenvalue distributions and free energy of various solutions. Using Hamiltonian Monte Carlo simulations, we are able to reconstruct the phase structure. Lastly, using bootstrapping with positivity, we are able to reconstruct the eigenvalue distribution of these models from their bootstrapped moments. All three methods show excellent agreement for a large matrix size.
	\end{abstract}
	\tableofcontents
	\section{Introduction}
In noncommutative geometry, spectral triples are noncommutative analogues of Riemannian manifolds. Fundamentally, the triple consists of a complex involutive algebra that acts by bounded operators on a Hilbert space, with a self-adjoint Dirac operator $D$ \cite{connes1995noncommutative}. Motivated by this paradigm, in \cite{barrett2015matrix}, Barrett proposed toy models of quantum gravity that integrate over spaces of finite spectral triples. These particular finite spectral triples have matrix algebras which allow for well-defined partition functions of such models as matrix integrals. More precisely, a \textit{fuzzy spectral triple} of signature $(p, q)$ is a real finite spectral triple of
the form
$$(M_{N}(\mathbb{C}), V\otimes M_{N} (\mathbb{C}), D, J, \Gamma ),$$
where $V$ is some irreducible Clifford module for the Clifford algebra $Cl_{p,q}$ with real structure $J$ and grading $\Gamma$, and $D$ is subject to the axioms of real spectral triples \cite{barrett2015matrix}. 

A \textit{Dirac ensemble} is a set of fuzzy spectral triples with all data fixed except for the Dirac operator, for which a probability distribution is given. Motivated by the spectral action principle \cite{connes1996gravity}, we assign a probability distribution of the form
\begin{equation*}
 \frac{1}{Z} e^{-S(D)}dD.
\end{equation*}
Due to the classification theorem of Barrett\cite{barrett2015matrix}, we know that for a fixed signature $(p,q)$ of Clifford algebra, the Dirac operator of a fuzzy geometry can be written as a certain finite sum of anti-commutators and commutators of Hermitian matrices tensored to products of gamma matrices. Hence, Dirac ensembles are in fact a rather specific class of random Hermitian matrices. Recent work has aimed to expand the framework of these models by introducing Yang-Mills-Higgs contributions \cite{perez2022multimatrix} and fermions \cite{barrett2024fermion,khalkhali2025large}.

 In this work, we restrict our attention to the type $(1,0)$ and $(0,1)$ fuzzy geometries whose Dirac operators are of the form 
		$$D_{(1,0)} = \{H,\cdot\} \quad \text{and}\quad D_{(0,1)} = [H,\cdot],$$
		respectively, for a Hermitian $N$ by $N$ matrix $H$.
For these signatures, we will study quartic Dirac ensembles with $S(D)=t_{2}\tr D^2 + t_{4}\tr D^{4}$. In terms of $H$, the probability distribution can be written as 
\begin{equation} \label{action}
 \frac{1}{Z} \exp\left(-2t_{2}( N\tr H^2 +\epsilon(\tr H)^2) -2t_{4}(N \tr H^{4} +4 \epsilon \tr H \tr H^3 + 3(\tr H^2)^2)\right)dH,
\end{equation}
where $\epsilon =1$ for type $(1,0)$ and $\epsilon =-1$ for type $(0,1)$. The measure $dD=dH$ is the Lebesgue measure on the space of $N \times N$ Hermitian matrices:
		\begin{equation*}
		d D = d H = \prod_{i=1}^N d H_{ii} \, \prod_{1 \leq i < j \leq N} d ({\text{Re}} (H_{ij})) \, d ({\text{Im}} (H_{ij})). 
		\end{equation*}
In the case of the type $(0,1)$ geometry, there are some additional nuances to defining the integral due to the non-trivial kernel of the commutator. The mapping from $D$ from $H$ is therefore not injective, and therefore there is not a pullback measure from $dD$. To make this measure well-defined to each fiber we equip the fiber $\left\{H=H_0+t I\right\}_{t \in \mathbb{R}}$ with the probability density $\sqrt{\frac{a}{\pi}} e^{-a \operatorname{Tr}(H)^2}$ for a constant $a>0$.

These models have been studied in previous works both analytically in the large $N$ limit \cite{khalkhali2020phase,khalkhali2022spectral,hessam2023double} and numerically \cite{barrett2016monte,d2022numerical,perez2021multimatrix,glaser2017scaling}. Most prior analytic works have studied only symmetric solutions due to the symmetry of the associated matrix integrals. This paper is dedicated to studying asymmetric solutions of these models, which have vastly richer solution spaces and phase structures. Asymmetric solutions of multi-tracial matrix models also arise in the context of fuzzy field theories \cite{tekel2015matrix,tekel2018asymmetric,Bukor:2024kqy,prekrat2023approximate}. It would be interesting to find deeper connections between Dirac ensembles and fuzzy field theories. 

In this work, we will study these quartic Dirac ensembles through three approaches. First, in the large-$N$ limit, the distribution of eigenvalues is known to minimise a functional equation that can be phrased as a scalar Riemann-Hilbert problem. Solving the resulting saddle point equation gives the distribution of eigenvalues for large $N$. From the saddle-point equation, we will derive explicit algebraic expressions for the density function and free energy. The recent thesis and paper \cite{d2022numerical,d2026symmetry} also employ these same techniques to study asymmetric solutions to the saddle point equations of these models.

Second, we will study these models using the bootstrapping with positivity method \cite{lin2020bootstraps,kazakov2022analytic,li2025analytic,khalkhali2025bootstrapping}. The moments of matrix models satisfy an infinite system of recursive equations known as the Schwinger-Dyson equations. The moments of matrix models are also subject to positivity constraints related to the Hamburger moment problem. Combining the Schwinger-Dyson equations and these positivity constraints into a non-linear optimisation problem is known as bootstrapping with positivity. Dirac ensembles have previously been bootstrapped in \cite{hessam2022bootstrapping}; however, the solution space was restricted to only symmetric solutions for reasons discussed below. We are then able to reconstruct the eigenvalue distributions of these models.

Third, we use Hamiltonian Monte Carlo (HMC) simulations. We are studying a single-matrix multi-trace model and such models are usually well-handled by this method. However, typically studied simple pure potential or fuzzy space models often have a vacuum structure that is symmetric in some respects; for example, the eigenvalues move close to one of the potential minima, $\pm \Phi_0$. In this model, we are dealing with a more complex structure in which solutions may be asymmetric, and multiple solutions of different structures coexist or have similar free energies. One can overcome this issue by initialising the simulation close to the correct a priori known solution, as done in \cite{d2026symmetry}. Instead, we employ a new algorithm described in \cite{Kovacik:2026ybz}.

All three methods show excellent agreement with each other as well as with the independent study of these models in the recent article \cite{d2026symmetry}. In the quartic type $(0,1)$ model, there is one phase transition between a symmetric one-cut and a symmetric two-cut solution. In the quartic type $(1,0)$ model, there are two phase transitions between an asymmetric two-cut solution and the symmetric one-cut solution.

Note that the quartic type $(0,1)$ and $(1,0)$ Dirac ensembles are symmetric in their matrix variable, hence all odd tracial moments of the matrix integrals \eqref{action} must be zero. Thus, asymmetric solutions of the saddle point equation or Schwinger-Dyson equations do not correspond to a matrix integral in either the formal or convergent interpretation of the model. The desire to find solutions corresponding to matrix integrals is why the authors of the work \cite{khalkhali2020phase} restricted their attention to solely symmetric solutions. Instead, asymmetric solutions as mathematical objects are the solutions of the saddle point equation or Schwinger-Dyson equations. In essence, this paper and \cite{d2026symmetry} are studying such solutions, which are found to be much richer in features than if one considered only symmetric solutions. 

In Section \ref{sec:RH}, we study the various solutions to the saddle point equations of both models. We derive algebraic constraints on the solution spaces and explicit formulae for the eigenvalue densities and free energies of various solutions. Section \ref{sec:bootstraps} combines the Schwinger-Dyson equations and positivity constraints to estimate the moments and eigenvalue distributions of the models. In Section \ref{sec:MC}, HMC simulations are used to study the free energy and eigenvalue distributions of these models. In Section \ref{sec:BS}, we compare the various approaches. Section \ref{sec:conclusion} summarizes our results and discusses future work.
In Appendix A and B, respectively, we derive the constraints on one-cut solutions and two-cut solutions. Appendix C details the computation of the free energy for one-cut solutions.

\section{The Riemann-Hilbert approach}\label{sec:RH}
	As studied in \cite{khalkhali2025large,khalkhali2020phase,d2022numerical,d2026symmetry}, the equilibrium measure $\mu$ minimizes the free energy functional
 \begin{equation*}
 I[\mu] = \int \int\left(U(x,y) - \ln|x-y|\right)d\mu(x)d\mu(y), 
 \end{equation*}
 where 
 \begin{align*}
 U(x,y) &=t_{4}(2(x^4 + y^4) + 12 x^2 y^2 + 8 \epsilon(x^3 y + xy^3)) + t_{2}(2(x^2+y^2)+4 \epsilon xy) + \delta_{(0,1)} 2a x y
 \end{align*}
 and ${\epsilon =1}$ for type $(1,0)$ and ${\epsilon =-1}$ for type $(0,1)$. Note that the free energy of such models in the limit is equal to the free energy functional evaluated at its minimum, i.e.
 \begin{equation*}
 \lim_{N \rightarrow \infty}\frac{1}{N^2}\ln Z = -I[\mu_{\text{min}}].
 \end{equation*}
 
We are interested in minimizing measures with compactly supported continuous density functions, i.e. $\mu = \rho(x) dx$ where $\text{supp}(\rho) =\bigcup_{i=1}^{k}[a_{i},b_{i}]$. Assuming the solution is of this form and by taking the weak derivative of $I$, one can show that any such minimizing $\mu$ must satisfy the saddle point equation
\begin{equation}
 \text{P.V.} \int\limits_{\text{supp}(\rho)}\frac{\rho(y)}{x-y}dy = \frac{1}{2}\frac{d}{dx}\int U(x,y)\rho(y)dy = 4 t_{4} x^3 +12 t_{4} m_{2}x + 12 \epsilon t_{4}m_{1}x^2 + 4 \epsilon t_{4} m_{3} + 2t_{2} x +2 \epsilon t_{2} m_{1} + \delta_{(0,1)} m_{1}a ,
\end{equation}
where P.V. denotes the Cauchy principal value. Denote the right-hand side as $Q$. This problem can be stated and solved as a Riemann-Hilbert problem. Define 
\begin{equation*}
 G_{\pm}(z) =\lim_{\varepsilon\rightarrow 0^{+}} \frac{1}{i \pi\sqrt{q(z)}}\int \frac{\rho(y)}{y -(z\pm i \varepsilon)}dy
\end{equation*}
on $\mathbb{C}\setminus \text{supp}(\rho)$ where 
\begin{equation*}
 q(z) = \prod_{i=1}^{k}(z-a_{i})(z-b_{i}).
\end{equation*}
The Riemann-Hilbert problem can then be succinctly written as
\begin{equation*}
 G_{+}(z) - G_{-}(z) = \begin{cases} 
 \frac{1}{2 \pi i}\int\limits_{\text{supp}(\rho)} \frac{\frac{i}{\pi}Q(s)}{(s-z)(\sqrt{q(s)})_{+}}ds & z \in\text{supp}(\rho) \\
 0 & z\in \mathbb{R}\setminus \text{supp}(\rho).
 \end{cases}
\end{equation*}

 This is a scalar Riemann-Hilbert problem that can be solved using the Sokhotski–Plemelj formula: 
 \begin{equation*}
 G(z) :=\frac{1}{i \pi} \int\frac{\rho(y)}{y-z}dy = \frac{\sqrt{q(z)}}{2 \pi i}\int \limits_{\text{supp}(\rho)} \frac{\frac{i}{\pi}Q(s)}{(s-z)(\sqrt{q(z)})_{+}}ds.
 \end{equation*}
The complex function $G$ has $2k$ constraints, which can be derived as follows. Consider 
\begin{equation*}
 G(z) = -\frac{(z^{k} + ...)}{2 \pi iz}\int\limits_{\text{supp}(\rho)} \frac{\frac{i}{\pi}Q(s)}{(\sqrt{q(z)})_{+}}\left(1+\frac{s}{z} +...+\left(\frac{s}{z}\right)^{k-1} +... \right)ds.
\end{equation*}
To ensure that $G(z)$ goes to zero as $z$ goes to infinity, it must be the case that 
\begin{equation}
 \int\limits_{\text{supp}(\rho)} \frac{\frac{i}{\pi}Q(s)}{(\sqrt{q(z)})_{+}}s^{j}ds =0
\end{equation}
for $0\leq j \leq k-1$.
For $j=k$, 
\begin{equation}
 \frac{i}{2 \pi}\int\limits_{\text{supp}(\rho)} \frac{Q(s)}{(\sqrt{q(z)})_{+}}s^{k}ds =1,
\end{equation}
since $G(z) = -\frac{1}{i \pi z} + \mathcal{O}(z^{-2})$.
Additionally, one can show that 
\begin{equation}
 \int\limits_{b_{j}}^{a_{j+1}}\left(H [\rho(x)] - \frac{1}{2\pi } Q(x) \right) dx = 0 
\end{equation}
for $1 \leq j \leq k-1$, where we denoted 
\begin{equation*}
 H [\rho(x)]:= \frac{1}{\pi}\text{P.V.} \int\frac{\rho(y)}{x-y}dy
\end{equation*}
as the Hilbert transform. We can simplify the integral of the Hilbert transform as 
\begin{align*}
 \int\limits_{b_{j}}^{a_{j+1}} H[\rho(x)]dx &=\frac{1}{\pi}\int\limits_{b_{j}}^{a_{j+1}} \int\frac{\rho(y)}{x-y}dy dx\\
 &=\frac{1}{\pi} \int \ln|a_{j+1}-y|\rho(y) dy - \frac{1}{\pi}\int \ln|b_{j}-y|\rho(y) dy.
\end{align*}

In the case of single tracial matrix ensembles, this is enough information to explicitly determine the support and moments in terms of the coupling constants. However, for multi-tracial integrals such as ours, we require some constraints on the moments. Suppose that such a potential's saddle point equation contains the moments $\{m_{i_{1}},m_{i_{2}},...,m_{i_{p}}\}$, then such constraints can be found by taking the ansatz for $\rho$ in terms of the couplings and these moments, and then considering the non-linear system of equations given by 
$$m_{i_{j}} = \int\limits_{\text{supp}(\rho)}x^{i_{j}}\rho(x)dx,$$ 
given by evaluating the right-hand side for the ansatz of $\rho$ depending on the number of cuts of the solution. In the following subsections, we will simplify these constraints and derive algebraic formulae for the density function of the equilibrium measure.

\subsection{One-cut solutions}
	Assume the support is $\text{supp} (\rho) = [a,b]$, with $a\not =b$. An analysis can be carried out using the Zhukovsky transform in more generality than in \cite{khalkhali2020phase} and similarly to that which was done in \cite{d2026symmetry}, in order to find equilibrium measures. We may summarise the above possible forms of the continuous density function in the following theorem. For the proof, see Appendix \ref{AppA}.
 
	\begin{theorem}\label{thm: one-cut density}
			The spectral density function of the one-cut solution of the type $(1,0)$  quartic Dirac ensemble is
		\begin{equation*}
		\rho(x) = \frac{1}{2\pi} \left(\frac{1}{\gamma ^2}-\frac{24 \epsilon t_{4} (x-\alpha ) \left(\alpha +24 \alpha \gamma ^4 t_4\right)}{24 \gamma ^4
 \epsilon t_{4}-1}+8 t_{4} \left(-2 \alpha ^2-\gamma ^2+x^2+\alpha x\right)\right)\sqrt{(x-a)(b-x)}_{[a,b]},
		\end{equation*}
 where $\alpha = \frac{a+b}{2}$ and $\gamma =\frac{b-a}{4}$ are the roots of the polynomials \eqref{eq:1-cut condition 2} and \eqref{eq:1-cut condition 1} and $\sqrt{\ldots}_\mathcal I$ denotes a function vanishing outside of the set of intervals $\mathcal I$.
	\end{theorem}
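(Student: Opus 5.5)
We follow the standard one-cut scheme: posit a density of the form $\rho(x)=\frac{1}{2\pi}M(x)\sqrt{(x-a)(b-x)}_{[a,b]}$ with $M$ a polynomial, and fix $M$ and the moments by matching the resolvent's behaviour at infinity. Since $Q$ in the saddle point equation is a cubic in $x$, $M$ is a quadratic. We parametrise the cut by its centre $\alpha=\frac{a+b}{2}$ and quarter-width $\gamma=\frac{b-a}{4}$, and apply the Zhukovsky substitution $x=\alpha+\gamma(w+w^{-1})$. Then $\sqrt{(x-a)(x-b)}=\gamma(w-w^{-1})$, and the resolvent $\omega(x)=\int\frac{\rho(y)}{x-y}dy$ is determined by the polynomial part of $Q(x(w))$ in $w$. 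Concretely, $\omega(x)=\frac12\left(Q(x)-M(x)\sqrt{(x-a)(x-b)}\right)$, and $M$ is the polynomial part of $Q(x)/\sqrt{(x-a)(x-b)}$ at infinity.

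The first step is to expand $Q(x)=4t_4x^3+12\epsilon t_4 m_1x^2+(12t_4m_2+2t_2)x+c_0$ about $x=\alpha$ and compute $M$. This gives
\[
M(x)=4t_4x^2+(2\alpha\cdot 4t_4\cdot\tfrac12\cdot 2+12\epsilon t_4m_1)x+\dots,
\]
with coefficients linear in $m_1,m_2$. The constant $c_0$, which contains $m_3$, drops out of $M$ and only enters the normalisation conditions. Imposing $\omega(x)\sim 1/x$ then gives two conditions. The vanishing of the $O(1)$ term, i.e.\ the $w^0$ coefficient of $Q(x(w))$, is the ``$j=0$'' condition. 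The coefficient of $w^{-1}$ must equal $1/\gamma$, which is the normalisation. These are the two polynomial relations \eqref{eq:1-cut condition 1} and \eqref{eq:1-cut condition 2} that the statement refers to.

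The second step is to eliminate the moments. We compute $m_1=\int x\rho$ and $m_2=\int x^2\rho$ from the ansatz. In the $w$-variable these are residue computations: $m_k$ is the coefficient of $x^{-k-1}$ in $\omega$. This gives $m_1$ and $m_2$ as polynomials in $\alpha,\gamma,t_4$ and in $m_1$ itself, since $M$ depends on $m_1$. The self-consistency $m_1=m_1(\alpha,\gamma,m_1)$ is linear in $m_1$ and can be solved. This is where the denominator $24\gamma^4\epsilon t_4-1$ appears: $m_1$ is proportional to $\alpha(1+24\gamma^4t_4)/(24\gamma^4\epsilon t_4-1)$, up to constants. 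Substituting $m_1$ and $m_2$ back into $M$, and using the normalisation condition to trade $t_2$ and the constant term for $1/\gamma^2$, reorganises $M$ into the claimed form
\[
\frac{1}{\gamma^2}-\frac{24\epsilon t_4(x-\alpha)(\alpha+24\alpha\gamma^4t_4)}{24\gamma^4\epsilon t_4-1}+8t_4\left(x^2+\alpha x-2\alpha^2-\gamma^2\right).
\]

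The main obstacle is this algebraic bookkeeping. We must express $t_2$ and the moment-dependent coefficients in terms of $(\alpha,\gamma)$ consistently, and check that the leftover equations are exactly \eqref{eq:1-cut condition 1} and \eqref{eq:1-cut condition 2}. We would do this symbolically, writing everything in the shifted variable $x-\alpha$. We expect the $x^2$ term $8t_4x^2$, rather than $4t_4x^2$, to come from the $12\epsilon t_4m_1x^2$ term combining with the $\alpha$-shift. We would verify this by checking the symmetric case $\alpha=0$, where the formula must reduce to the known symmetric one-cut density of \cite{khalkhali2020phase}. As a final consistency check, we would confirm that $\int\rho=1$ holds identically once the conditions are imposed.
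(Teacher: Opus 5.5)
Your overall route is the paper's: the Zhukovsky map, the vanishing of the $w^{0}$ coefficient and the normalization of the $w^{-1}$ coefficient, a residue computation giving a linear self-consistency for $m_1$, then back-substitution. However, the normalization linking the resolvent to $Q$ is off by a factor of $2$, and the step you propose to absorb the discrepancy cannot work.

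The saddle point equation is $\mathrm{P.V.}\int\frac{\rho(y)}{x-y}dy=Q(x)$. Hence the boundary values satisfy $\omega_++\omega_-=2Q$ on $[a,b]$, so $\omega=Q-\frac12M\sqrt{(x-a)(x-b)}$, with $M$ equal to \emph{twice} the polynomial part of $Q/\sqrt{(x-a)(x-b)}$. Your $\omega=\frac12\bigl(Q-M\sqrt{(x-a)(x-b)}\bigr)$ instead solves $\mathrm{P.V.}\int\frac{\rho(y)}{x-y}dy=\frac12Q(x)$. Carried through, it gives $M=\gamma^{-2}+12\epsilon t_4m_1(x-\alpha)+4t_4(x^2+\alpha x-2\alpha^2-\gamma^2)$ with $m_1=\alpha\frac{1+12t_4\gamma^4}{1-12\epsilon t_4\gamma^4}$. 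In the symmetric case it gives $48t_4^2\gamma^8+24t_4\gamma^4+2t_2\gamma^2=1$ instead of \eqref{eq:1-cut tree equation}. That is the stated result with $(t_2,t_4)$ replaced by $(t_2/2,t_4/2)$.

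Your proposed rescue, that $8t_4x^2$ arises from the $12\epsilon t_4m_1x^2$ term combining with the $\alpha$-shift, is impossible. Since $1/\sqrt{(x-a)(x-b)}=x^{-1}\bigl(1+\alpha x^{-1}+O(x^{-2})\bigr)$, the $x^2$ coefficient of the polynomial part of $Q/\sqrt{(x-a)(x-b)}$ equals the $x^3$ coefficient of $Q$. The quantities $m_1$ and $\alpha$ only feed the lower coefficients; in your normalization the linear coefficient is $4t_4\alpha+12\epsilon t_4m_1$, not $8t_4\alpha+12\epsilon t_4m_1$. Your own symmetric check would expose this: there $\alpha=m_1=0$, yet the known density still has $8t_4$.

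The fix is exactly the paper's bookkeeping:
\begin{enumerate}
\item Let $u_n$ be the $w^{n}$ coefficient of $2Q(x(w))$, so $u_3=8t_4\gamma^3$ and $u_2=24t_4\gamma^2(\epsilon m_1+\alpha)$. Then $\omega(x(w))=\frac{u_0}{2}+\sum_{n=1}^{3}u_nw^{-n}$.
\item Impose $u_0=0$ and $u_1=\gamma^{-1}$.
\item Take $M=\gamma^{-1}\sum_{n=1}^{3}u_nU_{n-1}\bigl(\frac{x-\alpha}{2\gamma}\bigr)$, in which $u_1$ enters only through the term $\gamma^{-2}$.
\end{enumerate}
The residue formula then gives $m_1=\alpha+24t_4\gamma^4(\epsilon m_1+\alpha)$, hence $m_1=\alpha\frac{1+24t_4\gamma^4}{1-24\epsilon t_4\gamma^4}$. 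Substituting this into $M$ yields precisely the stated density.
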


 Note that when the solutions are symmetric, the odd moments are zero and all terms containing the sign $\epsilon$ are zero. Hence, any symmetric one-cut solution is the same for both models. 

\subsubsection{Symmetric one-cut solution}
Assuming that the support of $\rho$ is symmetric and one interval, it was first found in \cite{khalkhali2020phase,d2022numerical} that
\begin{align*}
\rho(x)&= \frac{1}{2\pi}\left(8 t_{4}(x^2-\gamma^2) +\frac{1}{\gamma^2}\right)\sqrt{4\gamma^{2}-x^{2}}_{[-2\gamma,2\gamma]},
\end{align*}
where the support $[-2\gamma,2\gamma]$ is the real positive root of
\begin{align} \label{eq:1-cut tree equation}
192 t_{4}^2\gamma^8 + 48t_{4} \gamma^4 + 4t_{2} \gamma^2 = 1,
\end{align}
which can be written as 
\begin{equation*}
 \gamma^2 = \frac{1}{4 \sqrt{3}}\left(\sigma \sqrt{ -\frac{2}{t_{4}}+R}+\sqrt{ -\frac{4}{t_{4}}-\sigma \frac{\sqrt{3}t_{2}}{t_{4}^{2} \sqrt{R}} -R}\right),
\end{equation*}
where 
\begin{equation*}
 R :=\sqrt[3]{\frac{3 t_{2}^2 + 32 t_{4}}{4 t_{4}^{4}}}
\end{equation*}
and $\sigma = 1$ for $t_{2}\leq 0$ and $\sigma =-1$ for $t_{2}>0$. 

\subsection{Two-cut solutions}\label{sec:2cut solutions}
Assuming $\text{supp}(\rho) = [a_{1},b_{1}]\cup[a_{2},b_{2}]$, in \cite{d2022numerical}, it was shown that the density function is of the form
\begin{equation*}
 \rho(x) = \frac{1}{2 \pi} \left(24 \epsilon m_{1} t_4 +8 t_4 x + 4t_4(a_{1}+a_{2}+b_{1}+b_{2})\right)\text{cut}(x)\sqrt{-(x-a_{1})(x-a_{2})(x-b_{1})(x-b_{2})}_{[a_{1},b_{1}]\cup[a_{2},b_{2}]},
\end{equation*}
where 
\begin{equation*}
 \operatorname{cut}(x)= \begin{cases}-1 & x \in\left[a_1, b_1\right] \\ +1 & x \in\left[a_2, b_2\right]\end{cases}.
\end{equation*}
From the discussion at the beginning of this section, we know that we have the seven constraints. The algebraic solutions to these constraints are rather complicated but provide tameable formulae, though not so simple, as can be seen below. For example, $m_{1}$ can be written in terms of the 
limit points 
as 
\begin{equation*}
 \resizebox{1.05\hsize}{!}{$m_{1}=-\frac{2 \left(-3 a_{1}^5+a_{1}^4 (a_{2}+b_{1}+b_{2})+2 a_{1}^3
 \left(a_{2}^2+b_{1}^2+b_{2}^2\right)+2 a_{1}^2 \left(a_{2}^3-a_{2}^2
 (b_{1}+b_{2})-a_{2} \left(b_{1}^2+b_{2}^2\right)+(b_{1}-b_{2})^2
 (b_{1}+b_{2})\right)+a_{1} \left(a_{2}^4-2 a_{2}^2
 \left(b_{1}^2+b_{2}^2\right)+\left(b_{1}^2-b_{2}^2\right)^2\right)-3
 a_{2}^5+a_{2}^4 (b_{1}+b_{2})+2 a_{2}^3 \left(b_{1}^2+b_{2}^2\right)+2
 a_{2}^2 (b_{1}-b_{2})^2 (b_{1}+b_{2})+a_{2}
 \left(b_{1}^2-b_{2}^2\right)^2-(b_{1}-b_{2})^2 \left(3 b_{1}^3+5 b_{1}^2
 b_{2}+5 b_{1} b_{2}^2+3 b_{2}^3\right)\right)}{-15 a_{1}^4+12 a_{1}^3
 (a_{2}+b_{1}+b_{2})+6 a_{1}^2 \left(a_{2}^2-2 a_{2}
 (b_{1}+b_{2})+(b_{1}-b_{2})^2\right)+12 a_{1} \left(a_{2}^3-a_{2}^2
 (b_{1}+b_{2})-a_{2} (b_{1}-b_{2})^2+(b_{1}-b_{2})^2
 (b_{1}+b_{2})\right)-15 a_{2}^4+12 a_{2}^3 (b_{1}+b_{2})+6 a_{2}^2
 (b_{1}-b_{2})^2+12 a_{2} (b_{1}-b_{2})^2 (b_{1}+b_{2})-15
 b_{1}^4+12 b_{1}^3 b_{2}+6 b_{1}^2 b_{2}^2+12 b_{1} b_{2}^3-15
 b_{2}^4+32}$,}
\end{equation*}
and $m_{2}$ can be expressed in terms of the limit points and $m_{1}$ as
\begin{align}\label{eq:m2 2-cut}
\begin{split}
 m_{2}&= \frac{1}{24 t_4}(-12 a_1 m_1 - 12 a_2 m_1 - 12 b_1 m_1 - 12 b_2 m_1 - 4 t_2 - 
 3 a_1^2 t_4 \\
 &- 2 a_1 a_2 t_4 - 3 a_2^2 t_4 - 2 a_1 b_1 t_4 - 2 a_2 b_1 t_4 - 
 3 b_1^2 t_4 - 2 a_1 b_2 t_4 - 2 a_2 b_2 t_4 - 2 b_1 b_2 t_4 - 3 b_2^2 t_4).
 \end{split}
\end{align}
The remaining expressions are far more unwieldy.

However, assuming $\text{supp}(\rho) = [-b,-a]\cup [a,b]$, the formulae for the density function and moments become simple, as first found in \cite{khalkhali2020phase}. We have that
$$m_{2} =-\frac{t_{2}}{8t_{4}},$$
$$m_{4} = -\frac{t_{2}^2+ 8t_{4}}{64 t_{4}^{2}},$$
and
\begin{align*}
\begin{split}
\rho(x) = \frac{4t_{4}}{\pi}|x| \sqrt{(x^{2}-a^{2})(b^{2}-x^{2})}_{[-b,-a]\cup [b,a]},
\end{split}
\end{align*}
where the limit points of the support $[-b,-a]\cup [b,a]$ are
\begin{equation*}\label{a in terms of g}
a^{2}= \frac{-t_{2}+ 4 \sqrt{2t_{4}}}{8t_{4}},
\end{equation*}
and 
\begin{equation*}\label{b in terms of g}
b^{2}= -\frac{t_{2}+ 4 \sqrt{2t_{4}}}{8t_{4}}.
\end{equation*}
See Appendix \ref{AppA} for more details.

\subsection{The free energy}
Given a spectral density $\rho(x)$, we wish to compute the free energy. See Appendix \ref{appC} for details as well as the same computation carried out for an asymmetric one-cut case.

\begin{corollary}
\label{cor: sym free energy}
 The free energy for a one-cut symmetric solution of the type $(1,0)$ and $(0,1)$ quartic Dirac ensemble is of the form 
 \begin{align*}
 I[\rho_{\textit{1-cut}}^{\text{sym}}] &= -\ln (| \gamma | )+384 \gamma ^{12}+144 \gamma ^8+4 \gamma ^4+2 \left(8 \gamma ^6+\gamma ^2\right) t_{2}+\frac{1}{2},
 \end{align*}
 where $\gamma$ is a solution of \eqref{eq:1-cut tree equation}.

 The free energy for a symmetric two-cut solution of the type $(0,1)$ and $(1,0)$ quartic Dirac ensemble is 
 \begin{align*}
 I[\rho_{\text{2-cut}}^{\text{sym}}]=&\ \frac{1}{{4
 \left(\sqrt{-t_{2}-4 \sqrt{2}}+\sqrt{4 \sqrt{2}-t_{2}}\right)^2}}\bigg[t_{2}^3-t_{2}^2\sqrt{-t_{2}-4 \sqrt{2}} \sqrt{4 \sqrt{2}-t_{2}} \\
 &\left.+4
 \sqrt{-t_{2}-4 \sqrt{2}} \sqrt{4 \sqrt{2}-t_{2}}\right.\\
 &+8 t_{2} \ln
 \left(\frac{\sqrt{-t_{2}-4 \sqrt{2}}+\sqrt{4 \sqrt{2}-t_{2}}}{4 \sqrt{2}}\right)\\
 &+4
 \sqrt{-t_{2}-4 \sqrt{2}} \sqrt{4 \sqrt{2}-t_{2}} \left(\ln (32)-2 \ln
 \left(\sqrt{-t_{2}-4 \sqrt{2}}+\sqrt{4 \sqrt{2}-t_{2}}\right)\right)\bigg].
 \end{align*}
\end{corollary}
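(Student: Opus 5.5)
We evaluate $I[\mu]$ directly on the explicit densities from the previous subsections, splitting it into a potential part and a logarithmic part. The formulas show $t_4=1$ has been set: the $\gamma$-terms carry no $t_4$, and $\sqrt{2t_4}$ became $\sqrt 2$. We keep that normalisation throughout. For a symmetric density all odd moments vanish, so the $\epsilon$-terms and the $a$-term in $U$ drop out. This is why both models give the same answer. The potential part reduces to
\[
\iint U\,d\mu\,d\mu = 4m_4 + 12 m_2^2 + 4t_2 m_2 ,
\]
and $I[\rho]=4m_4+12m_2^2+4t_2m_2-\iint\ln|x-y|\rho(x)\rho(y)\,dx\,dy$.

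\textbf{Moments.} For the one-cut density $\rho(x)=\frac{1}{2\pi}(8(x^2-\gamma^2)+\gamma^{-2})\sqrt{4\gamma^2-x^2}$ we compute $m_2$ and $m_4$ with the semicircle moments $\frac{1}{2\pi}\int x^{2k}\sqrt{4\gamma^2-x^2}\,dx = C_k\gamma^{2k+2}$ (Catalan numbers). For the two-cut density we use $m_2=-t_2/8$ and $m_4=-(t_2^2+8)/64$ from Section~\ref{sec:2cut solutions}. Substituting gives the polynomial parts in $\gamma$ and $t_2$ respectively. We may then use \eqref{eq:1-cut tree equation} to trade powers of $\gamma$ against $t_2$, if this helps match the printed form.

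\textbf{Logarithmic term.} We avoid the double integral by using the saddle point equation. Integrating it gives, on the support,
\[
2\int\ln|x-y|\rho(y)\,dy = \int U(x,y)\rho(y)\,dy - \ell ,
\]
with a constant $\ell$. Integrating against $\rho$ then gives $I = \tfrac12\iint U\,d\mu\,d\mu + \tfrac12\ell$. We find $\ell$ by evaluating at a convenient point. In the one-cut case take $x=2\gamma$: there $\int\ln|2\gamma-y|\rho(y)\,dy$ can be computed by writing $\ln|x-y|$ as $\ln x$ minus an integral of $G$ from $x$ to $\infty$, i.e.
\[
\int\ln|x-y|\rho(y)\,dy=\ln x-\int_x^\infty\!\bigl(\pi i\,G(s)\text{-type resolvent}-\tfrac1s\bigr)ds .
\]
The resolvent is known in closed form, $\frac12(V'(s)-P(s)\sqrt{s^2-4\gamma^2})$, where $V'$ is the effective single-trace potential $4x^3+(12m_2+2t_2)x$. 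The resulting integrals are elementary and produce $-\ln\gamma$ plus polynomial terms. In the two-cut case take $x=b$ (or $x=a$). The change of variable $u=x^2$ turns the symmetric two-cut density into a one-cut density in $u$ on $[a^2,b^2]$, with $\ln|x^2-y^2|=\ln|x-y|+\ln|x+y|$. By symmetry the log-energy of $\rho$ is then half the log-energy of the pushed-forward measure on $[a^2,b^2]$. That measure is a one-cut equilibrium measure for a quadratic-in-$u$ potential, so its energy is standard and involves $\ln\frac{b^2-a^2}{4}$. This explains the terms $\ln(\sqrt{-t_2-4\sqrt2}+\sqrt{4\sqrt2-t_2})$, since $b\pm a$ can be written through these square roots.

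\textbf{Main obstacle.} The hardest step is the two-cut logarithmic term: fixing $\ell$ correctly, including the equal-constant condition on both cuts, which holds by symmetry. After that, the algebraic simplification into the stated form, with the denominator $4(\sqrt{-t_2-4\sqrt2}+\sqrt{4\sqrt2-t_2})^2$, needs the identities $a^2b^2=(t_2^2-32)/64$ and $b^2-a^2=-\sqrt2$ (up to sign conventions), which we would verify directly from the formulas for $a^2,b^2$. The one-cut case is routine once $\ell$ is found at the edge $2\gamma$.
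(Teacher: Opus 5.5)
The step that fails is your last one. Your reduction $I=\frac12\iint U\,d\mu\,d\mu+\frac12\ell$, with $\ell$ fixed at a point of the support, is a sound evaluation of $\iint(U-\ln|x-y|)\,d\mu\,d\mu$ at the given $\rho$. It does not, however, lead to the printed formulas, so the step ``trade powers of $\gamma$ against $t_2$ \dots\ to match the printed form'' cannot be carried out.

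The paper's proof uses a different reduction, $F=\frac12\iint U\rho\rho-\int\ln|s|\rho(s)\,ds$. It starts from the identity $\int\ln|x-y|\rho(y)\,dy=\int\ln|s|\rho(s)\,ds+\frac12\int U(x,y)\rho(y)\,dy$, and then evaluates the single integral $\int\ln|s|\rho(s)\,ds$ by a Fourier--cosine expansion after a trigonometric substitution. Both printed expressions are exactly this $F$. In the one-cut case, $\frac12\iint U=2m_4+6m_2^2+2t_2m_2=384\gamma^{12}+144\gamma^8+10\gamma^4+2(8\gamma^6+\gamma^2)t_2$ and $\int\ln|s|\rho=\ln\gamma+6\gamma^4-\frac12$.

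Integrating the saddle-point equation from $0$, which lies in the one-cut support, actually gives $\int\ln|x-y|\rho(y)\,dy=\int\ln|s|\rho(s)\,ds+\frac12\int\bigl(U(x,y)-U(0,y)\bigr)\rho(y)\,dy$. The paper's identity drops the $U(0,y)$ term; your $\ell$ keeps it. So your $I$ equals the printed expression plus $\frac12\int U(0,y)\rho(y)\,dy=m_4+t_2m_2=24\gamma^8+2\gamma^4+(8\gamma^6+\gamma^2)t_2$. This term does not vanish on \eqref{eq:1-cut tree equation}: at $t_2=-4\sqrt2$, where $\gamma^4=\frac18$, it equals $-\frac{27}{8}$. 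No use of the tree equation removes it.

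In the two-cut case the mismatch is structural. Your push-forward gives $\iint\ln|x-y|\rho(x)\rho(y)\,dx\,dy=\frac12\bigl(\ln\frac{|b^2-a^2|}{4}-\frac14\bigr)=-\frac34\ln2-\frac18$. This is a constant, because $|b^2-a^2|=\sqrt2$. So your route produces no $\ln\bigl(\sqrt{-t_2-4\sqrt2}+\sqrt{4\sqrt2-t_2}\bigr)$ at all. Attributing those terms to $b\pm a$ is mistaken, since only the constant product $(b-a)(b+a)$ enters.

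Write $S=\sqrt{-t_2-4\sqrt2}+\sqrt{4\sqrt2-t_2}$ and use $4\sqrt{t_2^2-32}=S^2-128/S^2$. The printed expression then simplifies to $\frac14-\frac{t_2^2}{8}-\ln\frac{S}{4\sqrt2}-\frac{32}{S^4}$. Here $\frac14-\frac{t_2^2}{8}=\frac12\iint U$, and $\ln\frac{S}{4\sqrt2}+\frac{32}{S^4}=\int\ln|s|\rho(s)\,ds$ is the logarithmic potential at the origin. The origin lies in the gap, where no integrated saddle-point equation holds.

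There are also two smaller slips. First, $\frac12\iint U=\frac14-\frac{t_2^2}{8}$ requires $m_4=+(t_2^2+8)/64$, since the push-forward semicircle has center $-t_2/8$ and radius $1/\sqrt2$; the negative value you import has the wrong sign. Second, for $W=\frac12(V'-P\sqrt{s^2-4\gamma^2})$ to decay you need $V'=8x^3+(24m_2+4t_2)x$, twice what you wrote.

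To prove the corollary as printed you would have to adopt the paper's reduction rather than your own. Carried out consistently, your method yields different formulas. Note also that the saddle-point equation with the factor $\frac12$ is the Euler--Lagrange equation of $\iint(\frac12U-\ln|x-y|)\,d\mu\,d\mu$, not of the functional $\iint(U-\ln|x-y|)\,d\mu\,d\mu$ that you adopted.
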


\subsection{Solutions of one matrix Dirac ensembles}

The stage is now set to obtain solutions for the two types of Dirac ensembles that are described by a single random Hermitian matrix with probability distribution \eqref{action}. Both are multi-trace models with combinations of moments $m_1,m_2$ and $m_3$ in the probability distribution. Therefore, the situation is going to be the same as in the case for the general pure potential single-trace quartic model, and we expect solutions with eigenvalue density supported on either one or two intervals. Since the action includes odd moments, we need to allow, at least in principle, for the ${H\to - H}$ symmetry of the model to be spontaneously broken. This was observed many times before for the models describing fuzzy field theories, see \cite{Tekel:2023tdx} for a review, and in a simpler model considered in detail recently in \cite{Bukor:2024kqy}. Hints of such behavior for these models appeared before \cite{d2022numerical} and were studied in \cite{d2026symmetry}.

As we have seen, the symmetric regime of these models is straightforward, since the solutions are completely determined by the saddle point equation. The model deforms the quadratic coupling. In the case of a one-cut solution, one obtains a polynomial equation of high order. If we are after the phase transition, things simplify further, and the deformation of the phase transition line of the pure potential model can be found analytically.

The asymmetric regime is more complicated. The one-cut distribution is unique, with defining equations that are not solvable analytically in a practical manner. We can solve them numerically and calculate the free energy given in the previous section.

The two-cut solution is a different can of worms. We can again solve (\ref{eq:2-cut C5}-\ref{eq:2-cut C7}) to isolate the saddle point conditions (\ref{eq:2-cut C1}-\ref{eq:2-cut C3}), but these do not uniquely determine the solution. There is a buffet of two-cut solutions which differ by the filling fraction, i.e. the fraction of the eigenvalues in one of the intervals
\begin{equation}
 \xi=\int\limits_{a_1}^{b_1} dx\,\rho(x)\in[0,1].
\end{equation}
The condition \eqref{eq:2-cut C4} is then equivalent to choosing the solution with minimal free energy, i.e. the solution where any small variation in the filling fraction leads to a decrease in probability.

When searching for the actual solutions, condition \eqref{eq:2-cut C4} is not practical. We have thus chosen a different approach\footnote{This was first used in \cite{Bukor:2024kqy}, but there the asymmetric two-cut solution was never preferred.}. For given values of the parameters of the model, we vary the filling fraction\footnote{In practice, we vary the ratio of the widths of the two intervals $(b_2-a_2)/(b_1-a_1)$, which is equivalent but technically easier to implement.} of the two-cut solution, numerically solve the defining equations, which have a unique solution for a given filling fraction, obtain the free energy and look for the value of the filling fraction that leads to minimal free energy. This process can be repeated several times depending on the desired precision.

For each parameter value, we need to compute the free energy of this solution. 

 \subsubsection{Solutions for the type $(0,1)$ model}
 
 In \cite{khalkhali2025large}, it is shown using Theorem 3.2 and Proposition 3.5 that there is a unique equilibrium measure that minimizes the free energy functional, and those solutions turn out to be symmetric. It has only two phases: a symmetric one-cut phase and a symmetric two-cut phase. When $t_{4} =1$ and $t_{2} < -4\sqrt{2}$, the symmetric solution on $[-b,-a]\cup [a,b]$ was found for both models in \cite{khalkhali2020phase}. What we find in this work is that even when one considers asymmetric solutions, the symmetric solutions are still the ones that minimize the free energy functional, agreeing with \cite{d2022numerical,d2026symmetry}. This transition is proven to be second order. It is also worth noting that symmetric solutions for the type $(0,1)$ and type $(1,0)$ models both have the same equilibrium measures. In Figure \ref{fig:type (0,1) free energy and moment}, we plot the free energy and second moment of the symmetric solutions.
\begin{figure}[H]
 
 \centering
 \begin{subfigure}[b]{0.45\textwidth}
 \centering
 \includegraphics[width=\textwidth]{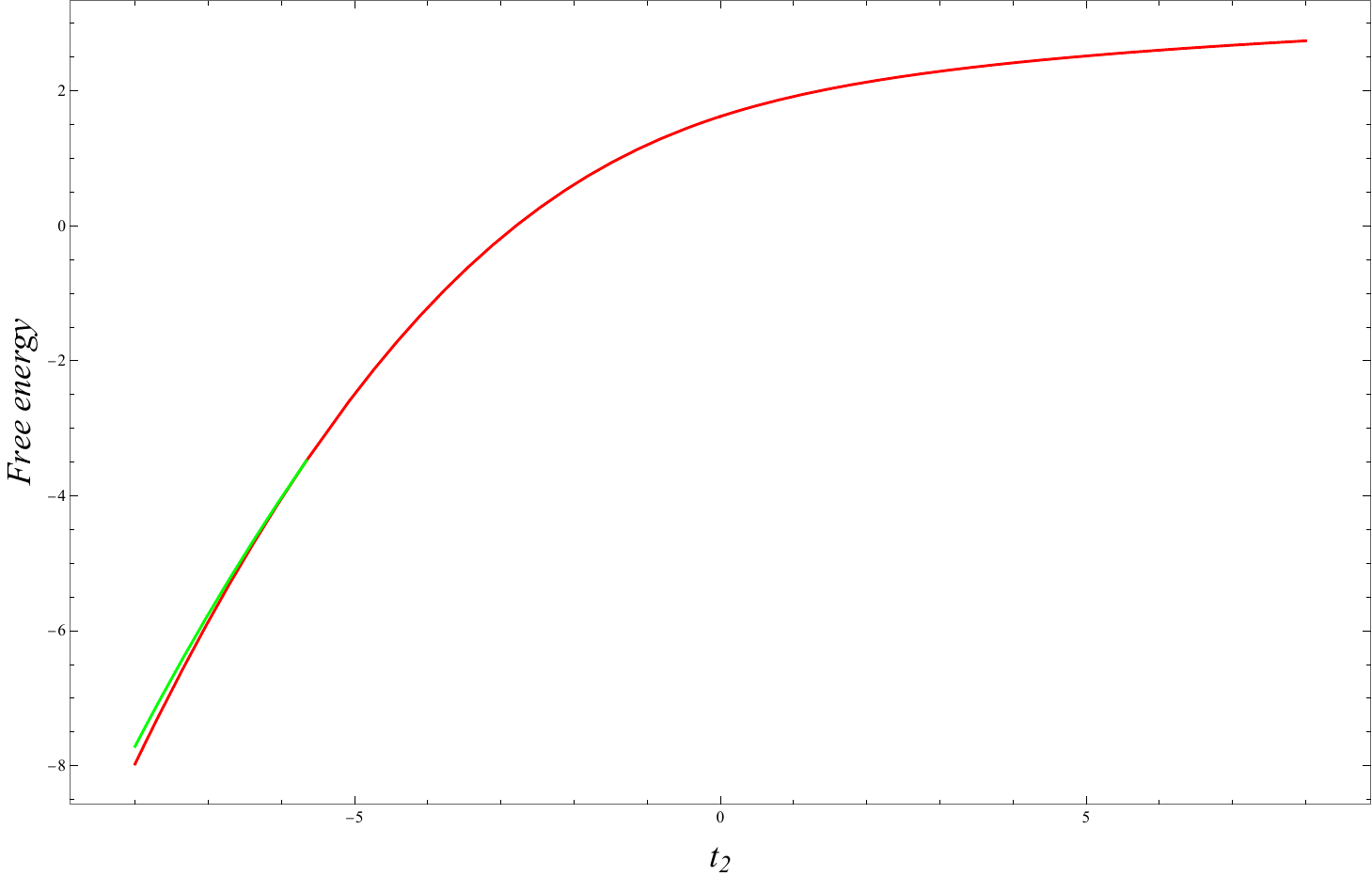}
 \label{fig:type (0,1) free energy}
 \end{subfigure}
 \begin{subfigure}[b]{0.45\textwidth}
 \centering
 \includegraphics[width=\textwidth]{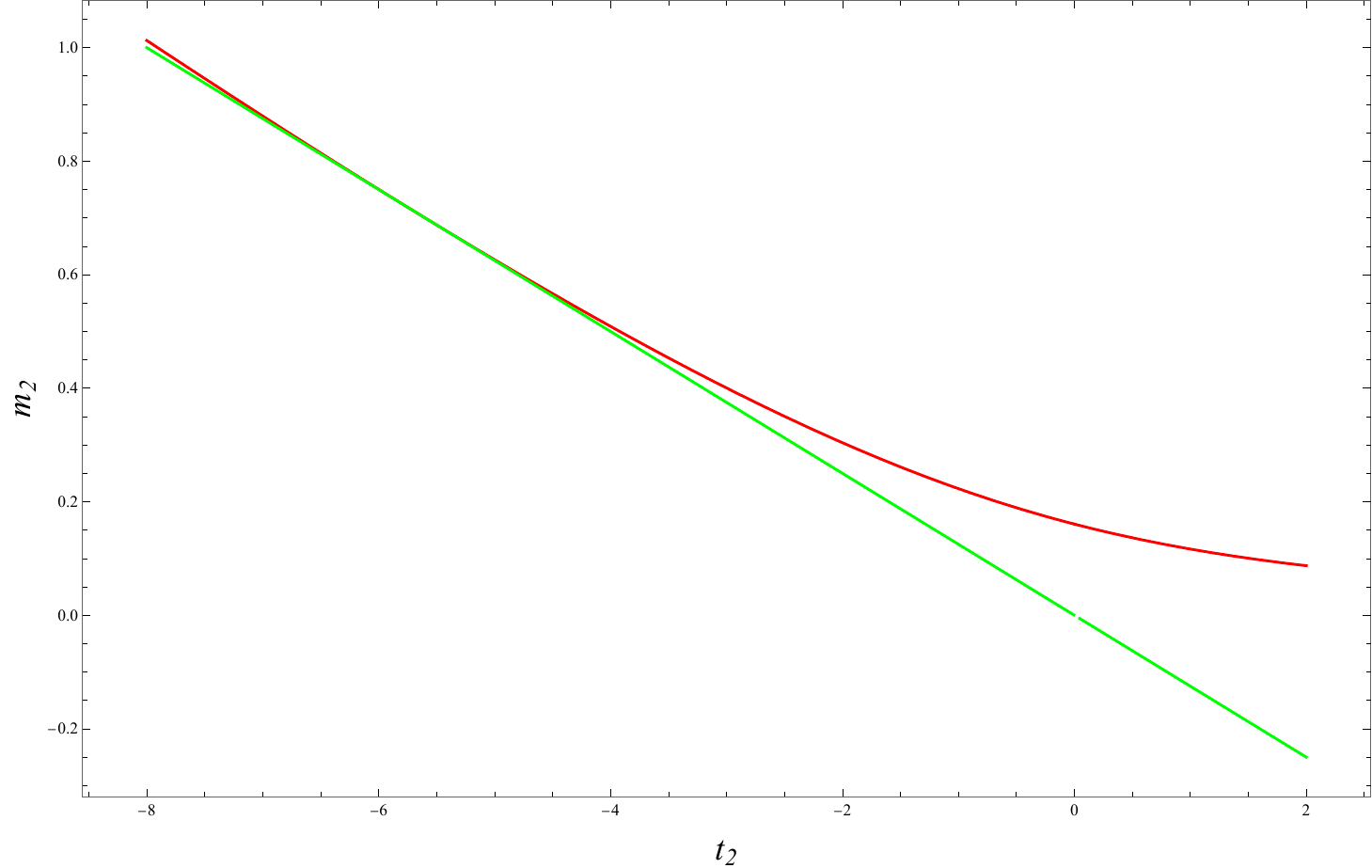}
 \label{fig:type (0,1) m2}
 \end{subfigure}
 \caption{The Figure on the left plots the free energy of the symmetric one-cut and two-cut solutions for $t_{4}=1$. The Figure on the right plots the symmetric one-cut and two-cut solutions' second moment for $t_{4}=1$. In both figures, the one-cut solution is in red and the two-cut solution is in green.}
 \label{fig:type (0,1) free energy and moment}
\end{figure}

 \subsubsection{Solutions for the type $(1,0)$ model}

 In this signature, the symmetric regime is the same, but the different signs in equation \eqref{action} have profound consequences. Following the procedure outlined in the introduction to this section, we have obtained the free energy diagram shown in Figure \ref{fig:type10free}.

 \begin{figure}[H]
 
 \centering
 \includegraphics[width=0.48\textwidth]{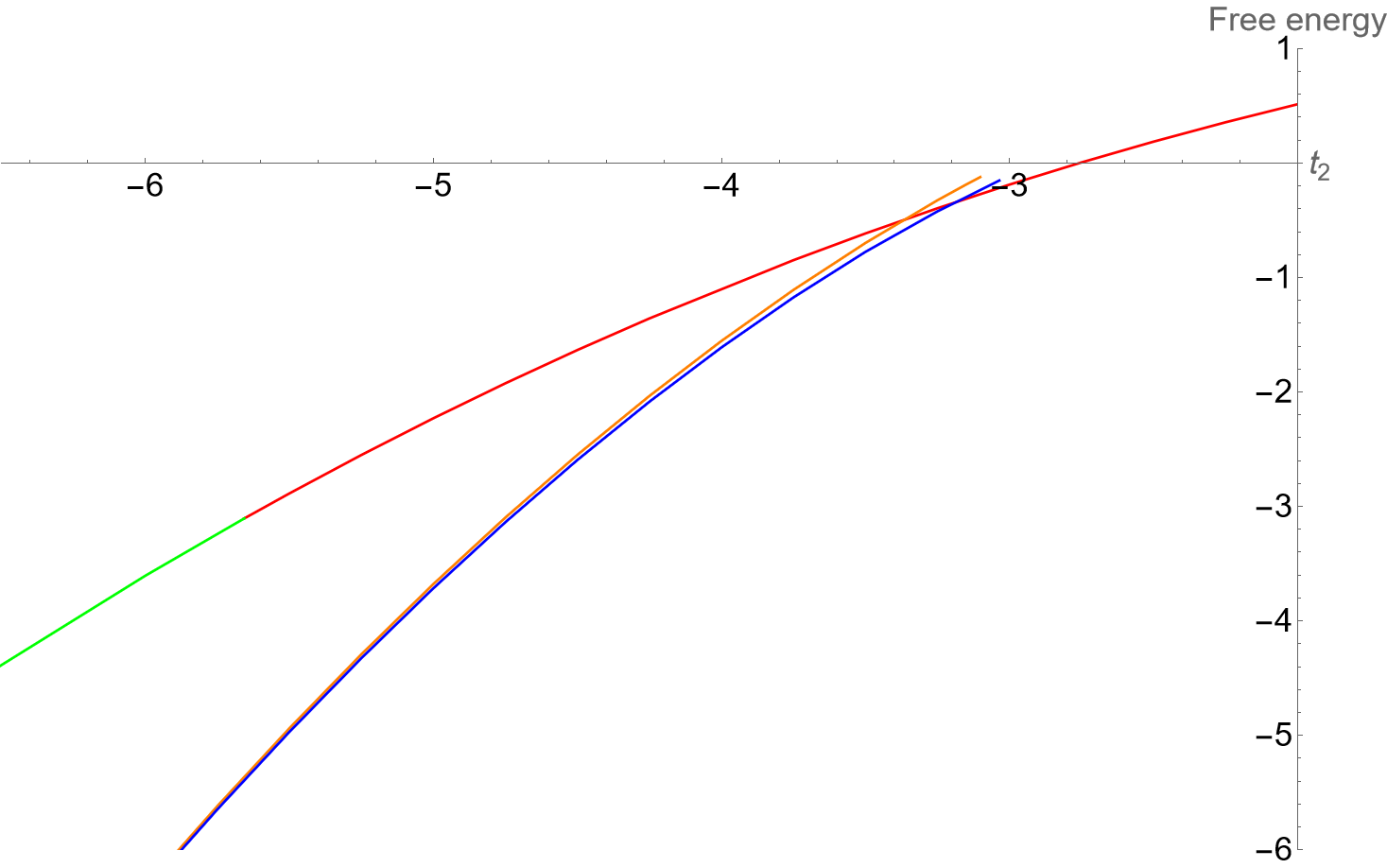}
 \includegraphics[width=0.48\textwidth]{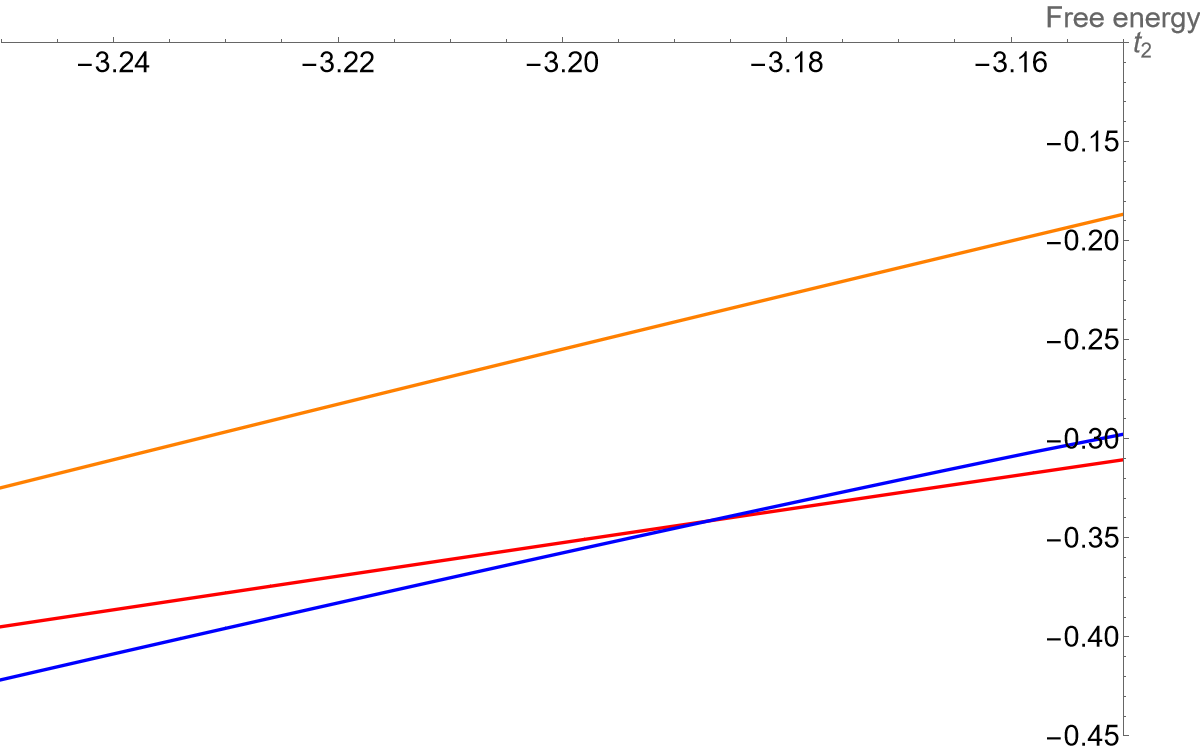}
 
 \caption{Comparison of free energies for various solutions of the $(1,0)$ model for ${t_4=1}$. The red line corresponds to the symmetric 1-cut solution, the green line to symmetric 2-cut solution, the blue line to the asymmetric 2-cut solution, and the orange line to the asymmetric 1-cut solution. The right panel zooms into the region close to the phase transition. In the left panel, we can see that the free energy of the asymmetric 1-cut and asymmetric 2-cut solutions come closer to each other as $t_2$ decreases, but an asymmetric 2-cut solution with some small, but nonzero filling fraction is always preferred. In the right panel, we can see that the phase transition happens before the value ${t_2\approx -3.036}$ where the asymmetric 2-cut solution ceases to exist.} 
 \label{fig:type10free}
\end{figure}

For positive and moderately negative $t_2$, the model stays in the symmetric regime. But at a critical $t_2$, for ${t_4=1}$ numerically determined to be
\begin{equation}\label{critt2}
 t_{2,c}=-3.18703
\end{equation}
in agreement with \cite{d2026symmetry}, the model starts to prefer the asymmetric two-cut solution with a very modest filling fraction. This happens before the one-cut to two-cut phase transition of the symmetric regime and thus there is no symmetric two-cut solution. We can see that the phase transition is more abrupt than in the symmetric regime, as the free energy does not change smoothly and its first derivative is discontinuous. This has also been the case for models describing fuzzy field theories \cite{Tekel:2023tdx}, but there the transition was between symmetric two-cut and asymmetric one-cut solutions.

To illustrate the situation, Figure \ref{fig:type10dens} shows the eigenvalue distribution of the preferred solution of the model for various values of the parameter $t_2$ on both sides of the critical point \eqref{critt2}.

 \begin{figure}[H] 
 
 \centering
 \includegraphics[width=\textwidth]{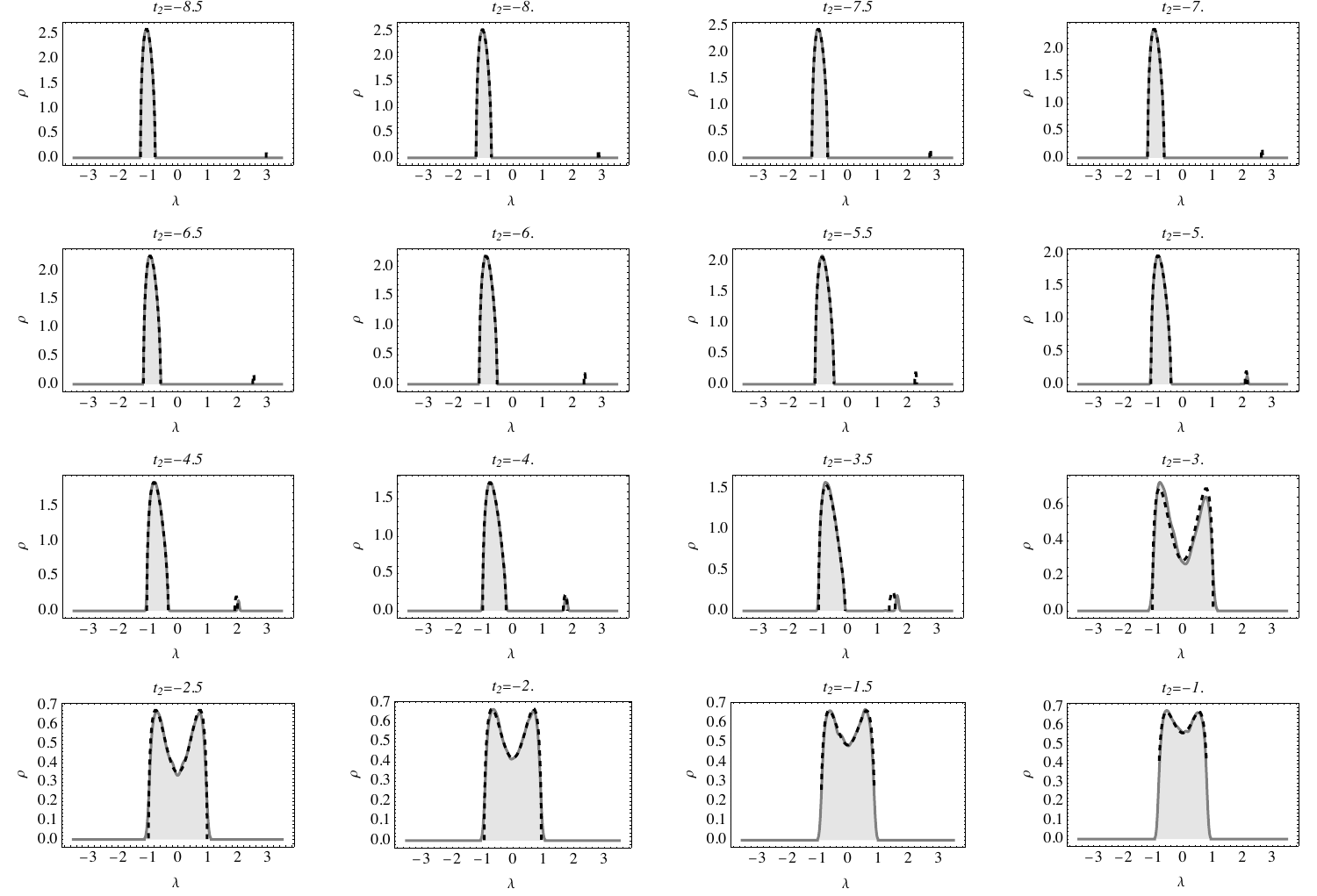}
 
 \caption{Comparison of the eigenvalue density function for various solutions of the $(1,0)$ model for ${t_4=1}$. We have used the cluster-flipping algorithm and aligned the eigenvalues at each step to ensure a negative sum, since the algorithm frequently flips between vacua of opposite signs. This enhanced the mild numerical asymmetry of the $t_2=-3$ result. The dashed line shows the theoretical prediction. Note that simulations failed to land in the two-cut solution.}
 \label{fig:type10dens}
\end{figure}

The consequences of this transition for the change in (spectral) geometry described by the corresponding Dirac operators, and especially the difference from the symmetric transition, are under investigation.

For different values of $t_4$, the situation is analogous. We have scanned the parameter space for different values of $t_4$ and repeated the numerical procedure, obtaining the phase diagram shown in the Figure \ref{fig:type10phase}.

 \begin{figure}[H]
 
 \centering
 \includegraphics[width=0.75\textwidth]{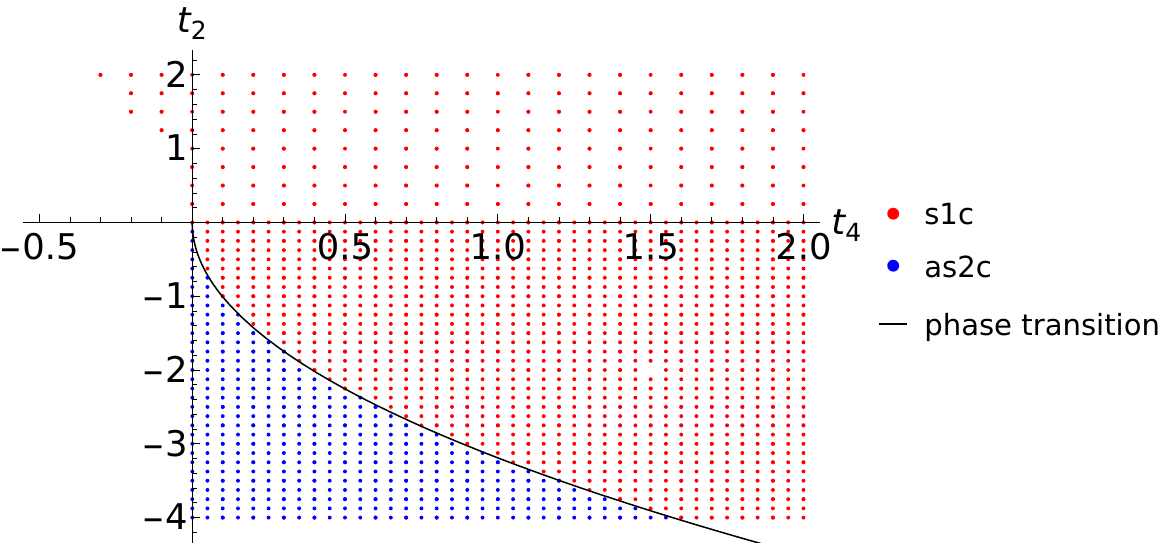}
 
 \caption{Phase diagram of the $(1,0)$ model in the ${(t_4,t_2)}$ parameter space. Blue dots represent parameters for which the preferred solution is of the one-cut type, the red dots represent parameters for which the preferred solution is of the asymmetric two-cut type. The black line is the fit of the transition line as a line separating the two regions given by \eqref{trafo}}
 \label{fig:type10phase}.
\end{figure}

There is a critical value of $t_2$ under which the stable solution is the asymmetric two-cut with a small but non-zero filling fraction, which decreases as $t_2$ becomes more negative, but never vanishes.

This was to be expected, since by rescaling the matrix $H$ in \eqref{action} one can set the value of $t_4$ to $1$ while changing the value of $t_2$ to $t_2/\sqrt{t_4}$. The best numerical fit of the line separating the two regions 
\begin{equation}\label{trafo}
t_2=- 3.19052 \sqrt{t_4}
\end{equation}
is consistent with this. We have also considered the case of positive $t_2$ and negative $t_4$, where the finite $N$ integral is only formal and the models make sense only in the $N\to\infty$ case. We confirmed that the asymmetric terms do not have any consequences here and we obtain only the symmetric one-cut solution for small enough $|t_4|$. At a certain point, the eigenvalues spill out of the local minimum created by positive $t_2$.

It is worth noting that there is another type of critical behavior not seen in the eigenvalue density functions in the same manner. In \cite{hessam2023double,khalkhali2024coloured}, the critical points associated with the double scaling limit of symmetric formal solutions of the Schwinger-Dyson equations for Dirac ensembles were studied. Such critical points are unrelated to those studied in this paper. It would be interesting to study whether asymmetric solutions of Dirac ensembles have new critical points in this sense.

\section{Bootstraps and reconstruction}\label{sec:bootstraps}
\subsection{The Schwinger-Dyson equations}
The Schwinger-Dyson equations (SDE) of these matrix models can be computed in a standard manner by considering the following identity 
\begin{equation*}
 \sum_{i,j=1}^{N}\int_{\mathcal{H}}\frac{\partial}{\partial H_{i j}}\left((H^{\ell+1})_{p q} e^{-S(D)} \right)dH = 0,
\end{equation*}
which follows from Stokes' Theorem. By rearranging the left-hand side and taking the large $N$ limit, one arrives at the following non-linear system of equations for $\ell>0$:
\begin{equation*}
 \sum_{k=0}^{\ell-1} m_k m_{\ell-k-1}=t_{2}\left(4 m_{\ell+1}+ 4 \epsilon m_1 m_{\ell}\right)+8t_{4}(m_{\ell+3}+ \epsilon m_3 m_{\ell}+ 3\epsilon m_1 m_{\ell+2}+3 m_2 m_{\ell+1}).
\end{equation*}
For more details we refer the reader to Section 3 of \cite{hessam2023double}. The formal symmetric solutions of the SDE were found from the SDE in the same work. An analogous approach can be applied for multi-matrix integrals where the entries of powers of $H$ are replaced by entries of words in the relevant matrix variables \cite{khalkhali2024coloured}.

\subsection{Positivity}
 In order to bootstrap these models, we first need positivity constraints.
Given a sequence of real numbers $(m_{n})$, the Hamburger moment problem asks if there is a positive Borel measure $\mu$ on the real line whose moments are precisely this sequence, i.e.
\begin{equation*}
 m_{n} = \int_{\mathbb{R}}x^{n}d\mu(x).
\end{equation*}
It is well-known that such a sequence exists if and only if the infinite Hankel matrix of moments
\[
\mathcal{H} =
\begin{pmatrix}
m_0 & m_1 & m_2 & m_3 & \cdots \\
m_1 & m_2 & m_3 & m_4 & \cdots \\
m_2 & m_3 & m_4 & m_5 & \cdots \\
m_3 & m_4 & m_5 & m_6 & \cdots \\
\vdots & \vdots & \vdots & \vdots & \ddots
\end{pmatrix}
\]
is positive semi-definite. That is, for all finitely supported sequences of complex numbers $(c_{i})$, we have that 
\begin{equation*}
 \sum_{i,j=0}^{\infty} m_{i+j}c_{i}\overline{c_{j}} \geq 0.
\end{equation*}

In the context of random matrices, if we assume that a given random Hermitian matrix $H$ has a well-defined distribution of eigenvalues, which is a very natural assumption, then its Hankel matrix of tracial moments must be positive semi-definite:
\[
\mathcal{H} =
\begin{pmatrix}
1 & 
\frac{1}{N}\ex[\tr H] & \frac{1}{N}\ex[\tr H^2]& \frac{1}{N}\ex[\tr H^3] & \cdots \\
\frac{1}{N}\ex[\tr H] & \frac{1}{N}\ex[\tr H^2] & \frac{1}{N}\ex[\tr H^3] & \frac{1}{N}\ex[\tr H^4]& \cdots \\
\frac{1}{N}\ex[\tr H^2] & \frac{1}{N}\ex[\tr H^3] & \frac{1}{N}\ex[\tr H^4] & \frac{1}{N}\ex[\tr H^5] & \cdots \\
\frac{1}{N}\ex[\tr H^3] & \frac{1}{N}\ex[\tr H^4] & \frac{1}{N}\ex[\tr H^5] & \frac{1}{N}\ex[\tr H^6] & \cdots \\
\vdots & \vdots & \vdots & \vdots & \ddots
\end{pmatrix},
\]
where $\mathbb{E}[\cdot]$ denotes the expectation value with respect to \eqref{action}.

By taking truncated submatrices of the Hankel matrix, one can derive non-linear constraints on the moments. Bootstrapping with positivity amounts to finding the region of possible solutions for a finite number of the SDE with a finite number of constraints from the Hankel matrix. As one increases the number of SDE and constraints, the region of possible solutions should eventually converge to any actual solutions. In practice, any solutions are usually apparent after less than a dozen SDE and constraints. Note that for Dirac ensembles, one obtains multiple sets of constraints since both $D$ and $H$ matrices have their own positivity constraints. Bootstrapping random matrices with positivity constraints was first done by Lin \cite{lin2020bootstraps}. Since this original work, many matrix integrals have been bootstrapped \cite{li2025analytic,khalkhali2025bootstrapping,hessam2022bootstrapping}, and modified or additional methodologies have arisen \cite{kazakov2022analytic,Kovacik:2025qgj,toriumi2026,maeta2026matrix}. More generally, the idea of bootstrapping with positivity has gained use in other areas of theoretical physics such as matrix quantum mechanics \cite{han2020bootstrapping}. The symmetric bootstrapped solutions to our models of interest were found in \cite{hessam2022bootstrapping}. We will expand our search here to asymmetric solutions. Note that we restrict our plots to when $t_{4}=1$, since $t_{4}$ is a redundant coupling constant; see the Appendix D of \cite{khalkhali2024coloured}.

\subsection{Method of eigenvalue reconstruction}
Given the moments obtained from the bootstrap, a natural question is whether one
can reconstruct the corresponding eigenvalue distribution from a finite subset
of moments. The goal is to reconstruct the eigenvalue distribution with sufficient
accuracy to evaluate the free energy and identify the preferred solution.

We follow the method of \cite{tekel2012constructing}, based on an expansion in orthogonal
polynomials with respect to a chosen initial distribution. Let $P(x)$ be an
initial distribution with moments $m_n$, and let $P'(x)$ denote the distribution
with moments $m_n'$ obtained from the bootstrap; details are in \cite{Kovacik:2025qgj}.

From the moments $m_n$, one constructs polynomials $K_n(x)$ orthogonal with
respect to $P(x)$,
\begin{equation}
\int K_n(x) K_m(x) P(x)\, dx = N_n \delta_{nm}.
\end{equation}
The distribution $P'(x)$ is then constructed as a deformation of $P(x)$ and
expanded in this set as
\begin{equation}
P'(x) = P(x) \sum_{n=0}^{M} \frac{1}{N_n}
\langle K_n(x) \rangle' K_n(x),
\end{equation}
where the coefficients are determined by the moments $m_n$ and $m_n'$. Since
$K_n(x)$ are polynomials, each coefficient depends only on a finite number of
moments. The truncation order $M$ controls the accuracy of the approximation.
A more detailed discussion of the reconstruction procedure in the context of
matrix models can be found in \cite{Kovacik:2025qgj}.

The convergence of this method depends sensitively on the choice of the initial
distribution, in particular on its support. To determine the support directly
from the moments, we use a shifted Hankel construction. Defining the Hankel
matrices
\begin{equation}
(H_0)_{ij}=m_{i+j}, \qquad (H_1)_{ij}=m_{i+j+1},
\end{equation}
the matrix $H_1$ corresponds to multiplication by $x$ in the monomial basis.
Introducing the truncated polynomial space
\begin{equation}
V_n=\mathrm{span}\{1,x,\dots,x^{n-1}\},
\end{equation}
one obtains the finite-dimensional representation
\begin{equation}
M_n = H_0^{-1}H_1
\end{equation}
of the multiplication operator projected onto $V_n$.

In the infinite-dimensional limit, the spectrum of the multiplication operator
coincides with the support of the measure. The eigenvalues of $M_n$ therefore
provide a finite-dimensional approximation to the support, with the extremal
eigenvalues converging to the endpoints as $n$ increases.

This determines the interval for the initial distribution and improves the
stability of the reconstruction. The resulting eigenvalue distribution is then
used to evaluate the free energy and compare different solutions.

\subsection{Type $(0,1)$}
As discussed in the previous sections, the solution space of the type $(0,1)$ model is far less complicated than the $(1,0)$. We find this in our bootstrap estimates. 
\begin{figure}[H]
 \centering
 \begin{subfigure}[b]{0.40\textwidth}
 \centering
 \includegraphics[width=\textwidth]{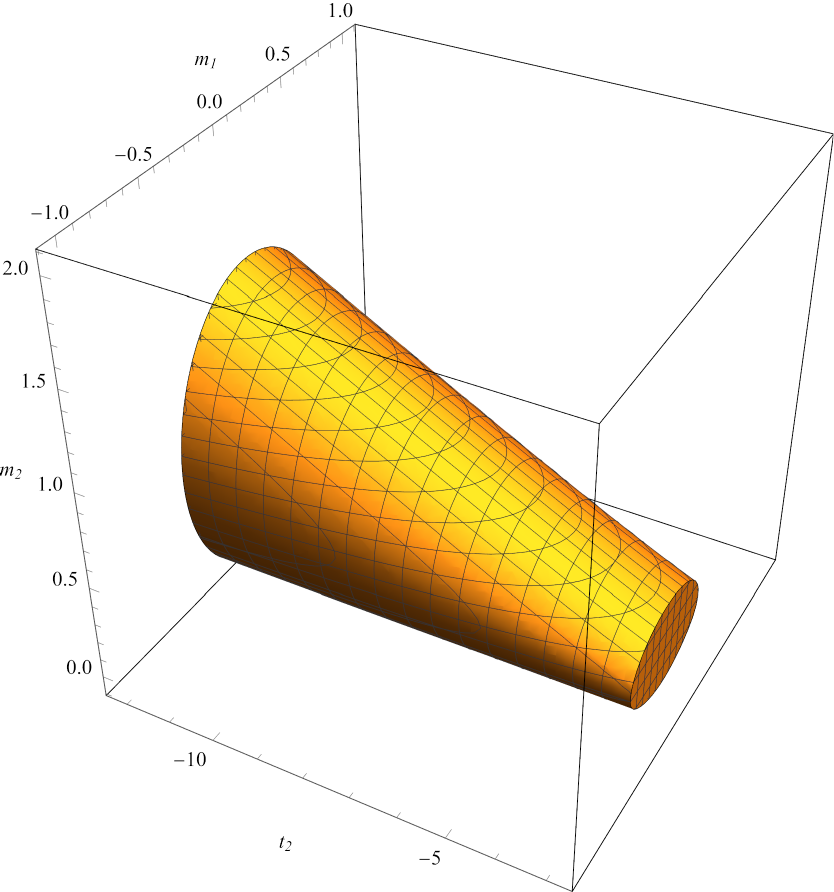}
 \label{fig:y equals x}
 \end{subfigure}
 \begin{subfigure}[b]{0.45\textwidth}
 \centering
 \includegraphics[width=\textwidth]{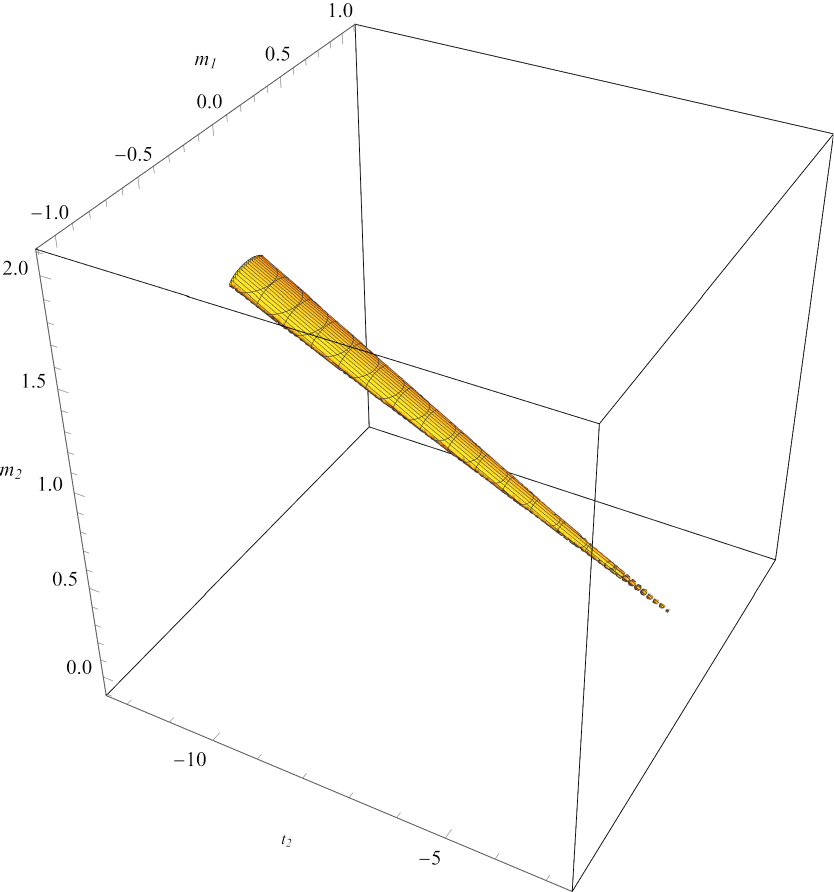}
 \label{fig:five over x 2}
 \end{subfigure}
 \caption{Bootstrapped region of solution space for the type $(0,1)$ quartic Dirac ensemble. The left plot was found with four SDE and a submatrix of the Hankel matrix of size four. The right plot was generated with ten SDEs and a Hankel submatrix of size ten.} 
 \label{fig:type (0,1) boostraps 3d}
\end{figure}

In particular, we see excellent agreement between the known symmetric solutions for a relatively small number of equations and matrix size. Even though we have allowed for non-zero odd moments, the same phase transition found in the purely symmetric case \cite{khalkhali2020phase,hessam2022bootstrapping} is found here. That is, there is a transition between the two-cut and one-cut symmetric solutions at $t_{2} =-4\sqrt{2}$ (when $t_{4}=1$). 

\begin{figure}[H]

 \centering
 \includegraphics[width=.75\textwidth]{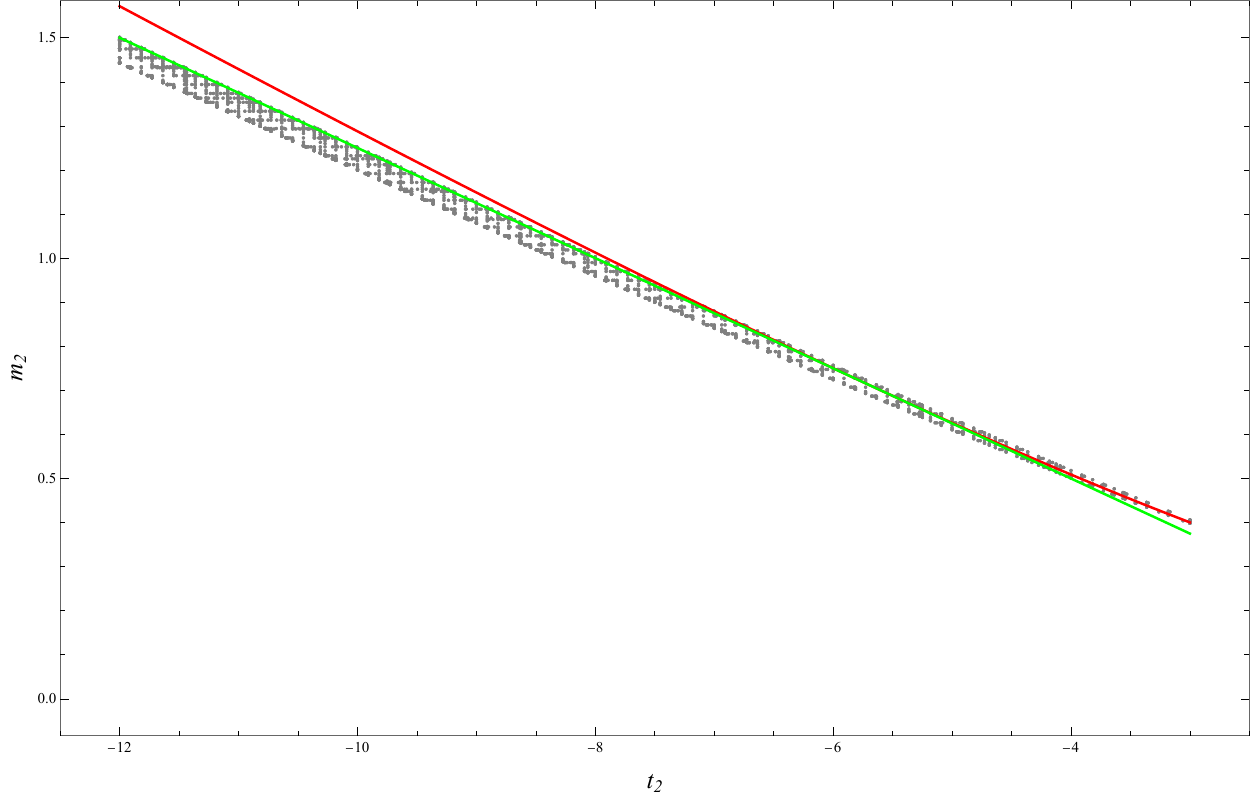}
 \caption{Bootstrapped points of the positive solution space for $m_{2}$ of the type $(0,1)$ quartic Dirac ensemble plotted against the known analytic symmetric solutions for $t_{4}=1$. Bootstrapping data was generated with ten SDE and a submatrix of the Hankel matrix of size ten. The real symmetric one-cut solution is plotted in red and the symmetric two-cut solution is plotted in green. There is a phase transition at $t_{2} = -4\sqrt{2}$ between these two solutions.} 
 \label{fig:type (0,1) boostraps} 
\end{figure}

\subsection{Type $(1,0)$}
Unlike in the type $(0,1)$ model, we see the appearance of two asymmetric two-cut solutions. These asymmetric solutions bypass the symmetric two-cut solution entirely. We find that there is an abrupt transition between the asymmetric two-cut solution and the symmetric one-cut solution. Both asymmetric solutions differ by a sign of the first moment. In Figure \ref{fig:type (1,0) boostraps 3d}, these solutions present as a fork in the solution space. The precise location of this phase transition is explored further in the next section.

\begin{figure}[H]
 \centering
 \begin{subfigure}[b]{0.45\textwidth}
 \centering
 \includegraphics[width=\textwidth]{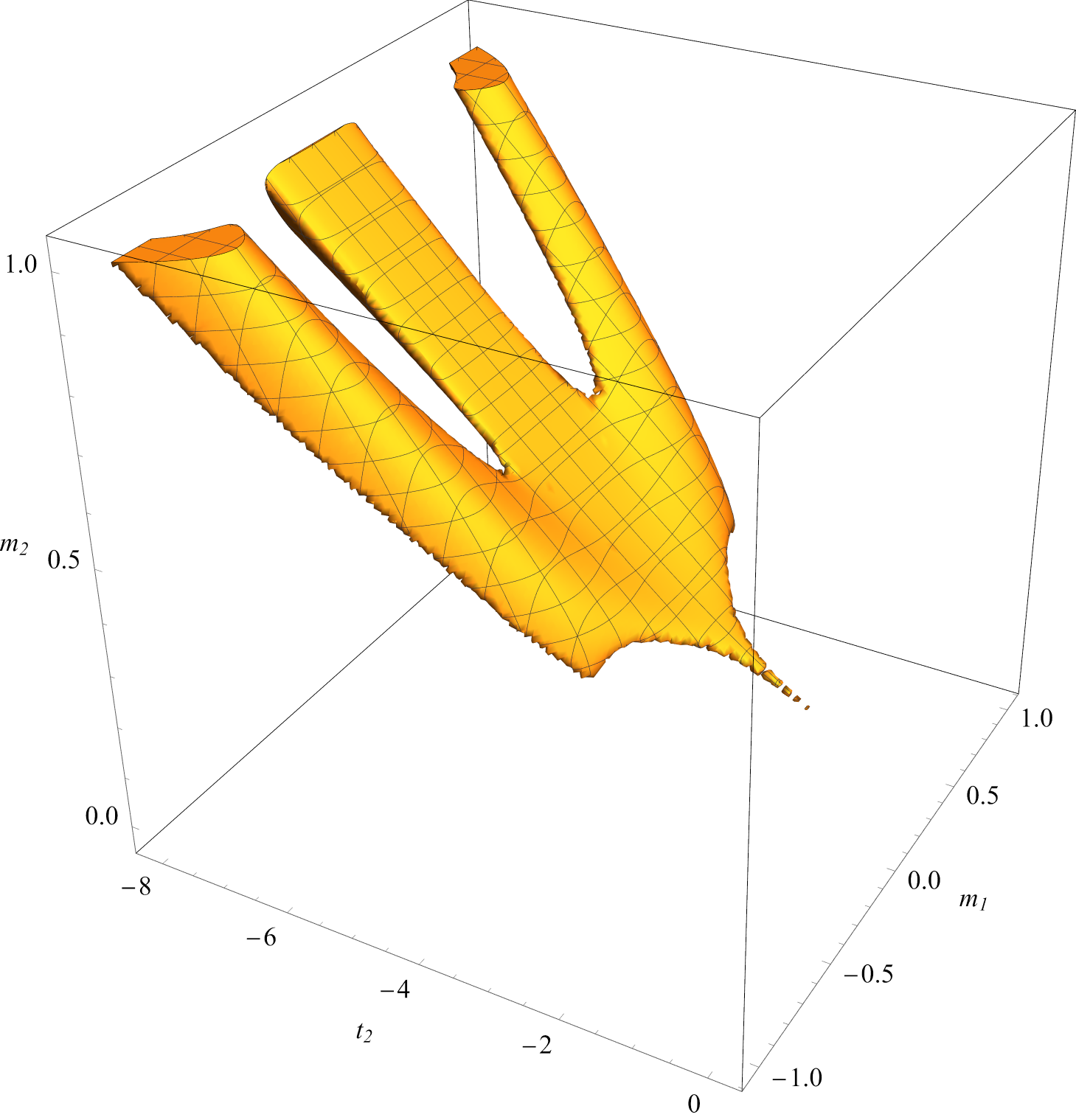}
 \label{fig:y equals x 2}
 \end{subfigure}
 \begin{subfigure}[b]{0.45\textwidth}
 \centering
 \includegraphics[width=\textwidth]{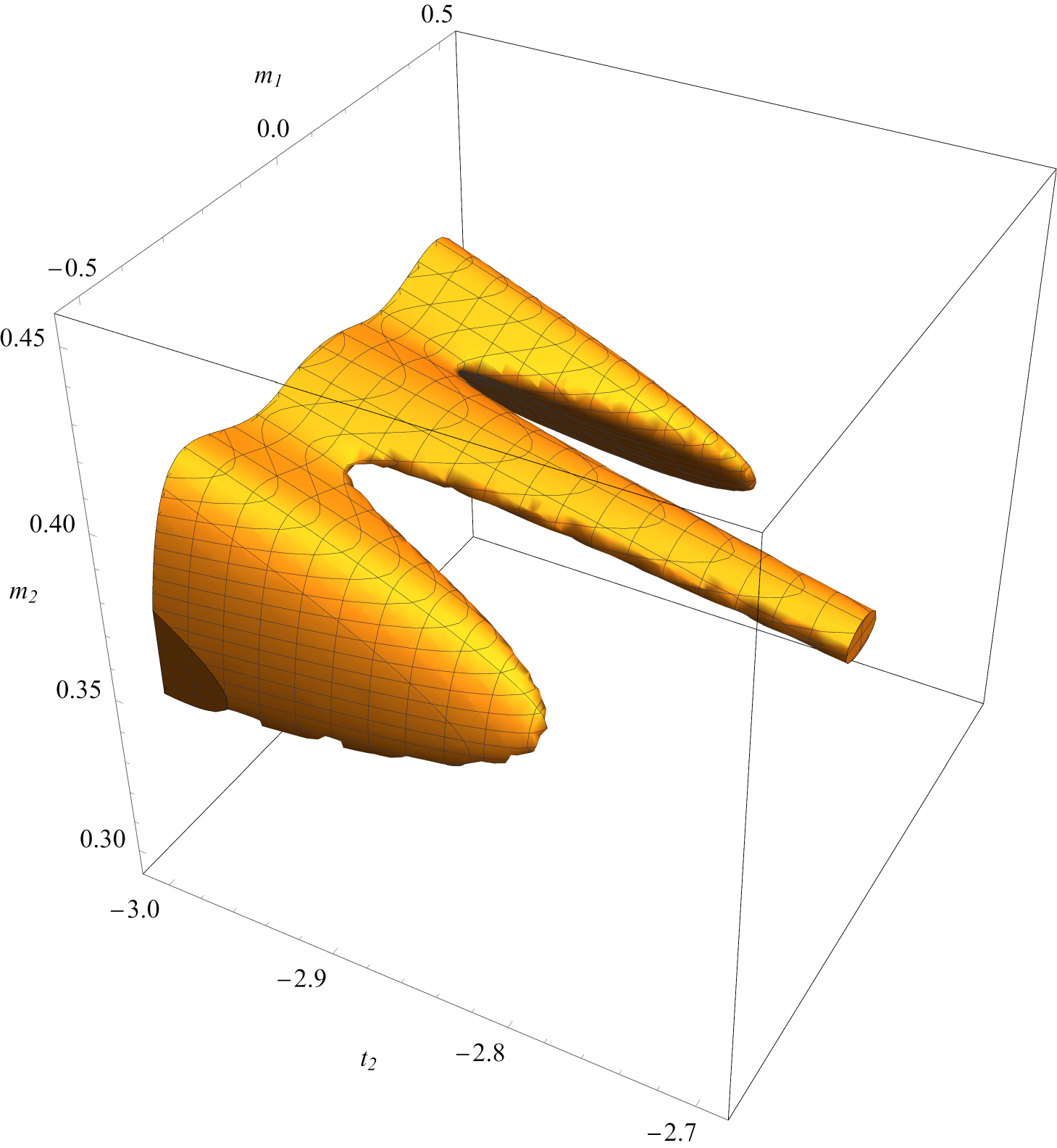}
 \label{fig:five over x}
 \end{subfigure}
 \caption{The bootstrapped region of the solution space for the type $(1,0)$ quartic Dirac ensemble, at different scales. These were found using 10 SDE and a submatrix of the Hankel matrix of size 10. We see a fork in the solution space indicating at least three possible solutions satisfy the positivity constraints and SDE.}
 \label{fig:type (1,0) boostraps 3d}
\end{figure}

\begin{figure}[H]
 \centering
 \begin{subfigure}[b]{0.45\textwidth}
 \centering
 \includegraphics[width=\textwidth]{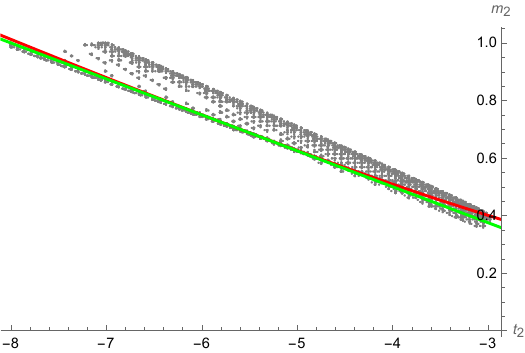}
 \end{subfigure}
 \begin{subfigure}[b]{0.45\textwidth}
 \centering
 \includegraphics[width=\textwidth]{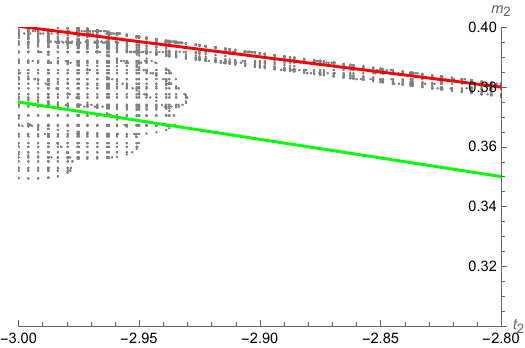}
 \end{subfigure}
 \caption{Bootstrapped points of the positive solution space for $m_{2}$ of the type $(1,0)$ quartic Dirac ensemble plotted against the known analytic symmetric solutions for $t_{4}=1$. The real symmetric one-cut solution is plotted in red and the symmetric two-cut solution is plotted in green. The left plot was found with twelve SDE and a submatrix of the Hankel matrix of size twelve. Here, we can see a clear separation between the asymmetric and symmetric solutions. The right plot was found with fourteen SDE and a submatrix of the Hankel matrix of size fourteen. We can see that when we refine the search, the two-cut symmetric solution can be ruled out. } 
 \label{fig:type (1,0) boostraps} 
\end{figure}

\begin{figure}[H]
 \centering
 \begin{subfigure}[b]{\textwidth}
 \centering
 \includegraphics[width=\textwidth]{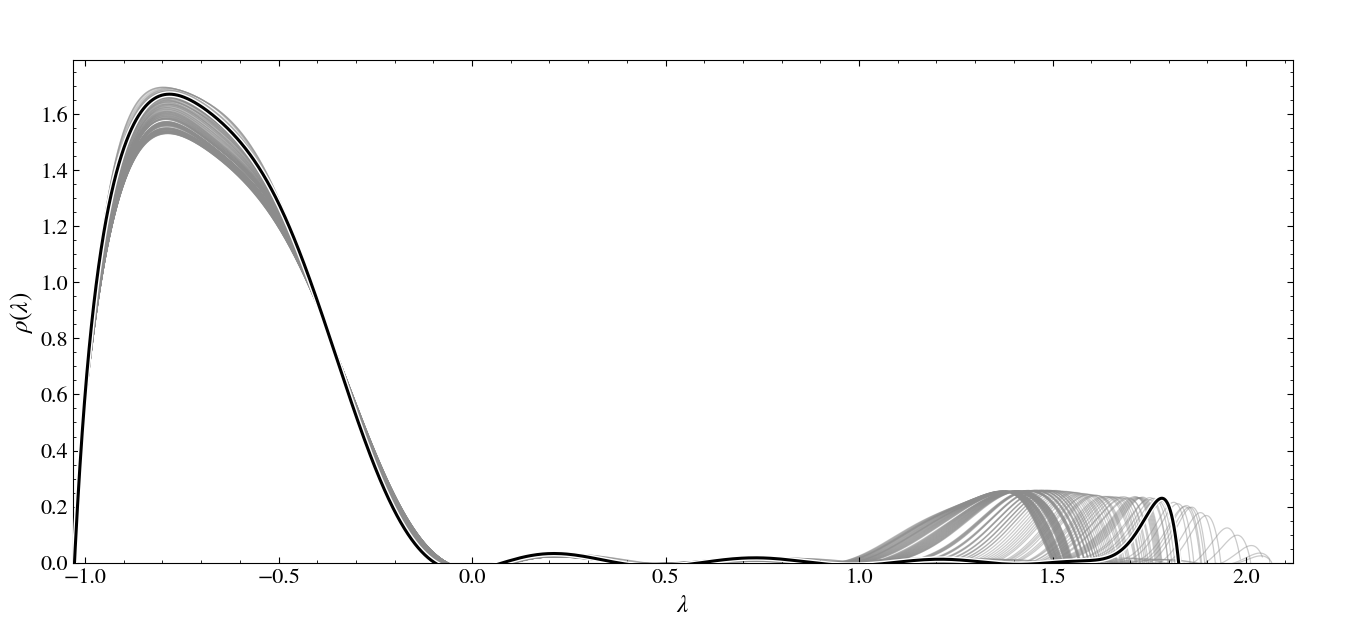}
 \end{subfigure}
 \caption{Two-cut asymmetric eigenvalue distributions reconstructed from the bootstrap
moments for $t_2=-4$. By reconstructing the distributions across different values
of $t_2$, we determine the energetically preferred solutions and compare the
symmetric and asymmetric phases. The preferred solution for this value of $t_2$
is highlighted in bold.}
 \label{fig: boot reconstruct} 
\end{figure}

\section{Numerical simulations}\label{sec:MC}

We have performed Hamiltonian HMC \cite{brooks2011handbook, Duane:1987de} of the action \eqref{action} for both the $(1,0)$ and $(0,1)$ models. The first $10\%$ of the simulation was thermalisation, then we measured eigenvalues every 10 steps, reducing the autocorrelation of the data. We tuned the step parameter to achieve an acceptance rate somewhat below $100\%$. This might seem resource wasteful, as the rate close to $70\%$, which is recommended, could be achieved by some fine-tuning. However, we wanted to use the same step across the whole span of $t_2$, so we took a conservative value of the step-length. 

\subsection{Simulations of the $(1,0)$ model}

From a certain point of view, these numerical simulations are standard in the field of matrix models and we do not explain in detail the technique. However, there is one crucial point which made this model stand out and forced us to develop a new algorithm to overcome the novel difficulties. The details can be found in \cite{Kovacik:2026ybz}. The issue with the model is that it has a rich structure of false vacua, for example for some values of $t_2$ we have one vacuum in which all of the eigenvalues are close to each other. Then, there are other vacua in which some small (and different) portions of eigenvalues are split from the main bulk and are positioned far away from it. Using the standard HMC, one way is to initiate the simulation very close to the true vacuum, as was done in \cite{d2026symmetry}. Another option is to start numerical simulation from different initial configurations and then pick the correct one by computing the corresponding free energy. This is doable to some extent and does not rely on prior knowledge of the model; however, it is both computationally expensive and, for models with multiple vacua, the true one can be missed. We have devised an algorithm that occasionally takes a portion of eigenvalues and moves them, en masse, to a different position – subject to the Metropolis check. Therefore, after sufficiently long simulation time, although significantly faster if left for spontaneous tunnelling, the system finds the true vacuum state. 

The system is best described by the eigenvalue distribution. This can be used to compute various observables, for example the first two moments of the distribution which are standard observables in matrix models:
\begin{equation} \label{eq:HMC_obs}
 m_1 =\ex \left[ \frac{1}{N}\sum \limits_{i=1}^N \lambda_i \right], 
 m_2 = \ex \left[ \frac{1}{N}\sum \limits_{i=1}^N \lambda_i^2 \right ].
\end{equation}

As the models are symmetric with respect to $\lambda \rightarrow -\lambda$, we plot $|m_1|$ instead of $m_1$ as this symmetry is spontaneously broken. The results for the $(1,0)$ model are shown in Figure \ref{fig:dirac10_observables}. We have performed the simulations for $N=10$ and $N=100$ to grasp the difference that the finite-N effects create. One can observe that the behaviour changes quantitatively but not qualitatively. 

\begin{figure}[htbp]
 \centering
 \begin{subfigure}[b]{0.48\textwidth}
 \includegraphics[width=\textwidth]{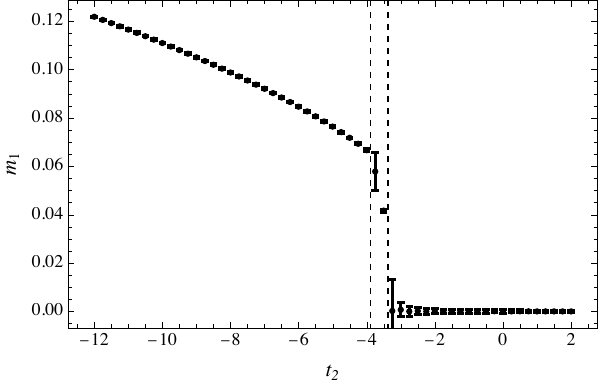}
 \caption{$|m_1|$, $N=10$.}
 \end{subfigure}
 \hfill
 \begin{subfigure}[b]{0.48\textwidth}
 \includegraphics[width=\textwidth]{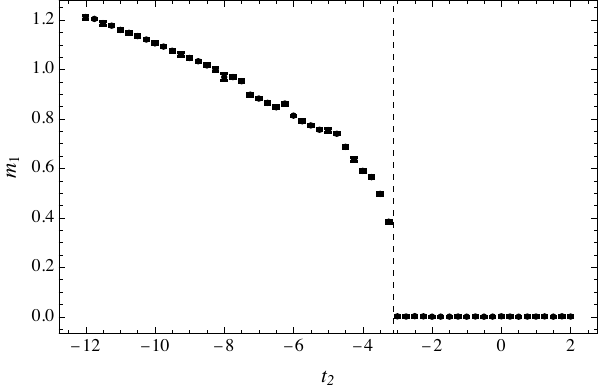}
 \caption{$|m_1|$, $N=100$.}
 \end{subfigure}

 \vspace{0.5em}

 \begin{subfigure}[b]{0.48\textwidth}
 \includegraphics[width=\textwidth]{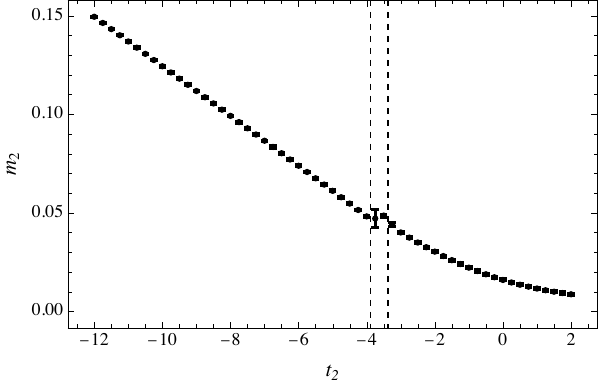}
 \caption{$m_2$, $N=10$.}
 \end{subfigure}
 \hfill
 \begin{subfigure}[b]{0.48\textwidth}
 \includegraphics[width=\textwidth]{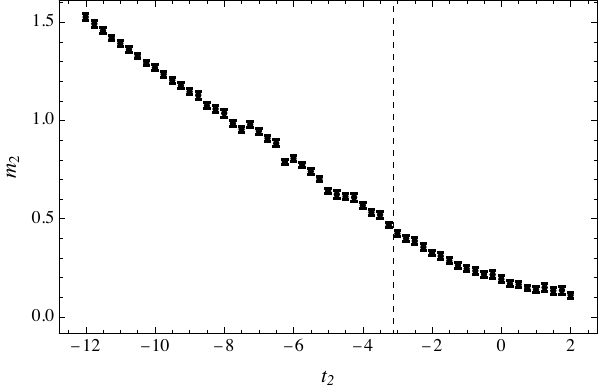}
 \caption{$m_2$, $N=100$.}
 \end{subfigure}

 \caption{Mean values of the observables \eqref{eq:HMC_obs} obtained using HMC simulations. The vertical dashed lines are at $t_2=-3.875$ and $t_2=-3.375$ for $N=10$ and at $t_2=-3.125$ for $N=100$. These points show where the behaviour of the eigenvalue distribution changes. The small jumps on the $N=100$ plots are explained in the caption of the Figure \ref{fig:dirac10_trajectories}.}
 \label{fig:dirac10_observables}
\end{figure}

Even using the aforementioned procedure, the simulation did not always find the preferred vacuum as the number of HMC steps was limited. While for $N=10$, it was easy to thermalise to the correct vacuum, in some of the $N=100$ runs, we noticed jumps during the main runs. The cusps are caused by different simulation settings across vacua; see Figure \ref{fig:dirac10_trajectories}.

\begin{figure}[htbp]
 {\centering
 \begin{subfigure}[b]{0.32\textwidth}
 \includegraphics[width=\textwidth]{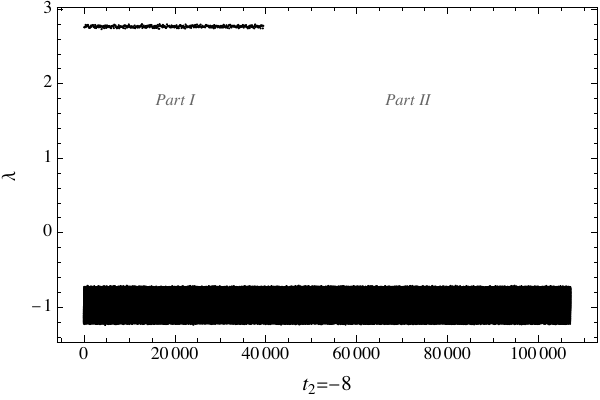}
 \caption{HMC trajectory of eigenvalues for $t_2=-8$.}
 \end{subfigure}
 \hfill
 \begin{subfigure}[b]{0.32\textwidth}
 \includegraphics[width=\textwidth]{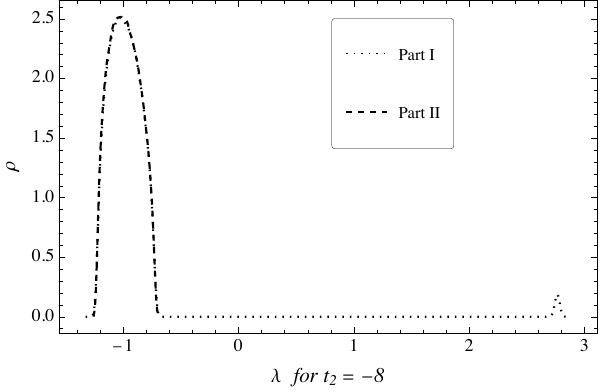}
 \caption{Eigenvalue distributions as measured in Part I and Part II.}
 \end{subfigure}
 \hfill
 \begin{subfigure}[b]{0.32\textwidth}
 \includegraphics[width=\textwidth]{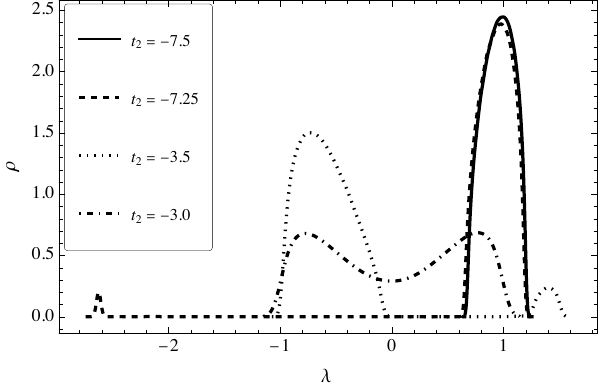}
 \caption{Eigenvalue distribution as measured at different values of $t_2$.}
 \end{subfigure}

 \caption{Panels (a) and (b) show the results (eigenvalue trajectories and distributions) for a particular simulation run for $t_2=-8$. We see that the algorithm found a different vacuum around the step $40, 000$. The corresponding distributions obtained by taking only the part before or after this transition lead to different distributions whose existence is confirmed by other methods. We have also measured the free energy before and after this transition; it decreased from $-13.229$ to $-13.246$. The right plot shows the eigenvalue distribution at some values of $t_2$. Notice that two close values of $t_2$ result in different distributions, not because there is a phase transition between them; the simulations just remained in different vacua and did not transition. }
 \label{fig:dirac10_trajectories}
 }
\end{figure}

\subsection{Simulations of the $(0,1)$ model}

Analyses of the $(0,1)$ model are simpler, mostly due to the fact that the vacuum is less structured. There are only symmetric solutions detected in the numerical analysis, and the model continuously transfers from a one-cut distribution for $t_2>-4\sqrt{2}$ to a two-cut distribution for $t_2<-4\sqrt{2}$. This model has one special feature: the action does not depend on the first moment, $\tr H$. In other words, the model has a symmetry $H \rightarrow H+\alpha \textbf{1}$. One way to address this issue would be to fix the code to consider only traceless matrices. The other, which we opt for, would be to use the same code but after measuring the eigenvalues, centre them at each Monte-Carlo step; that is, at each step, the stored eigenvalues are shifted to have $\sum \limits_{i=1}^N \lambda_i = 0$. Because of this, we show only the plot of $m_2$ and some sample eigenvalue distributions. 

\begin{figure}[htbp]
 {\centering
 \begin{subfigure}[b]{0.32\textwidth}
 \includegraphics[width=\textwidth]{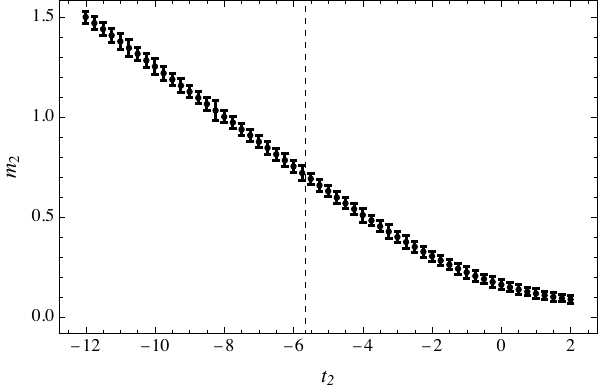}
 \caption{Mean values of $m_2$ for the $(0,1)$ model with $N=10$ with centred eigenvalues to have $m_1=0$. The vertical line is $t_2=-4 \sqrt{2}$.}
 \end{subfigure}
 \hfill
 \begin{subfigure}[b]{0.32\textwidth}
 \includegraphics[width=\textwidth]{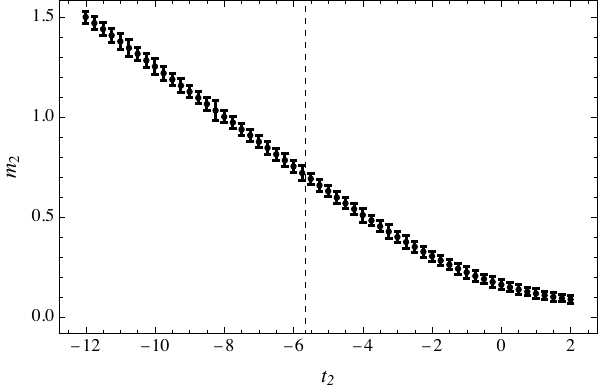}
 \caption{Mean values of $m_2$ for the $(0,1)$ model with $N=100$ and centred eigenvalues to have $m_1=0$. The vertical line is $t_2=-4 \sqrt{2}$.}
 \end{subfigure}
 \hfill
 \begin{subfigure}[b]{0.32\textwidth}
 \includegraphics[width=\textwidth]{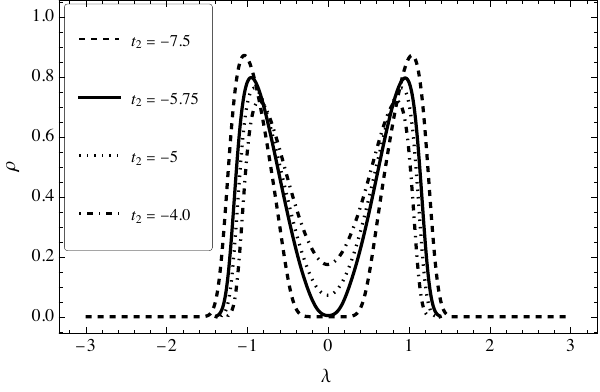}
 \caption{Eigenvalue distribution as measured at different values of $t_2$, the simulation closest to the critical value is the one with $t_2=-5.75$.}
 \end{subfigure}

 \caption{Results of the HMC simulation of the type $(0,1)$ quartic model showing a smooth transition around the critical value of $t_2=-4 \sqrt{2}$ where the distribution changes from the symmetric one-cut to the symmetric two-cut phase.}
 \label{fig:dirac01_results}
 }
\end{figure}

\section{Comparison with the analytical and bootstrap results}
\label{sec:BS}

We have presented three different ways to obtain the eigenvalue distribution for both the $(1,0)$ and $(0,1)$ model. The first is the standard analytical approach, the second is the bootstrap estimate and the third is the HMC numerical simulation. The analytical method is the most accurate in the current setting. Given bootstrapped results, we then use the method of \cite{Kovacik:2025qgj} to reconstruct the distribution. This step is crucial not only for making the plots but also for computation of the free energy which distinguishes between the plethora of vacua of the $(1,0)$ model. We start with a rather crude approximation so the bootstrap eigenvalue estimates contain fluctuations in the areas out of the support. The numerical simulations for $t_2=-3.15$ produced a clear result. For the $t_2=-4$ simulation, we have turned off the cluster algorithm and initiated the simulation in which $1-3$ of $100$ eigenvalues separated from the bulk and move to $-1.7,-1.75,-1.8,-1.85$ as an initial position; that is $12$ different initial conditions. The system remained in the selected solution, then the free energy was calculated, choosing a minimizing configuration. In the presented case, two eigenvalues moved to $-1.85$ (other shifts with two eigenvalues were rather similar). Figure \ref{fig:comparisons10} shows the comparison. Apart from the fluctuations of the bootstrap estimate, we see excellent agreement. Such fluctuations could be removed by choosing a different form of the reconstructing function. The barely visible shift of the smaller part of the distribution obtained using HMC for $t_2=-4$ might be caused by the fact that the true vacuum state would prefer to have more than two but less than three eigenvalues split. To achieve this, one would require a matrix size larger than $N=100$. Still, the agreement is satisfactory and we can claim to have the model under control in three different ways.

\begin{figure}[htbp]
 {\centering
 \begin{subfigure}[b]{0.48\textwidth}
 \includegraphics[width=\textwidth]{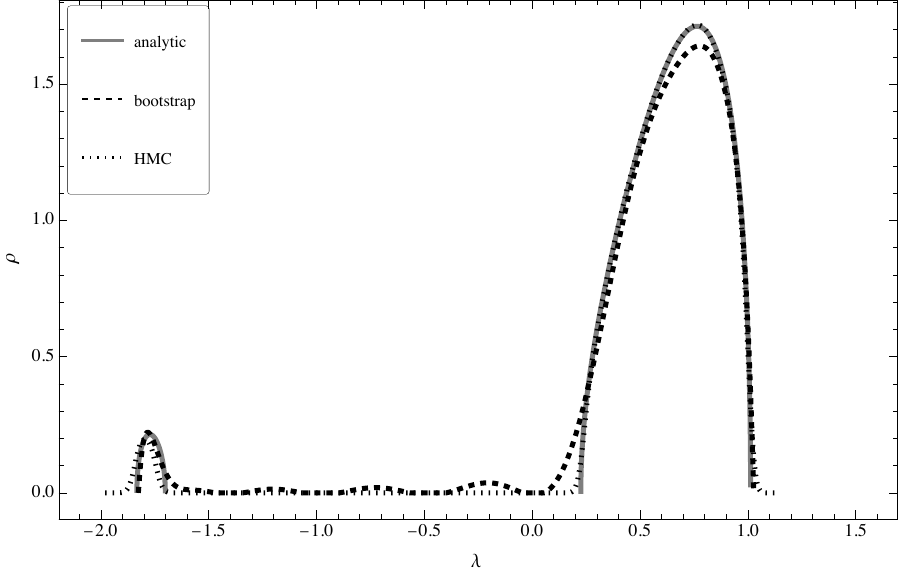}
 \caption{Comparison for $t_2=-4$.}
 \end{subfigure}
 \hfill
 \begin{subfigure}[b]{0.48\textwidth}
 \includegraphics[width=\textwidth]{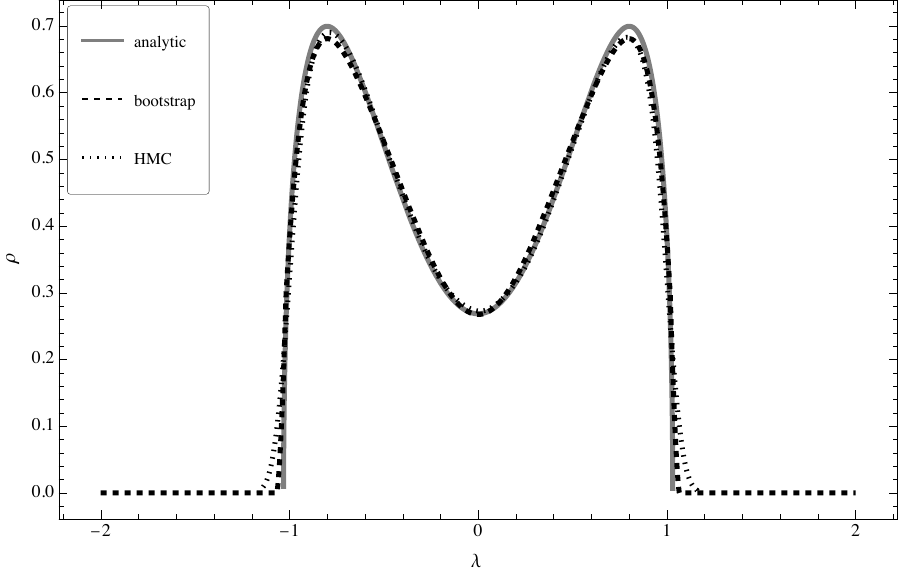}
 \caption{Comparison for $t_2=-3.15$.}
 \end{subfigure}

 \caption{Comparison of the eigenvalue estimates obtained using the analytical, bootstrap and HMC method with $N=100$ for the $(1,0)$ model. }
 \label{fig:comparisons10}
 }
\end{figure}

\section{Conclusions and future work}\label{sec:conclusion}
In this work, we studied the type $(1,0)$ and $(0,1)$ quartic Dirac ensembles using three approaches. We first evaluated these models by analyzing the associated Riemann-Hilbert problems, i.e. the saddle-point equations, with a focus on asymmetric solutions. We found explicit formulae for the distribution of eigenvalues and free energy for solutions of the saddle-point equation. The second approach applies bootstrapping with positivity to derive explicit bounds on the moments of these models. From these estimates on moments, the eigenvalue distributions are reconstructed. Lastly, HMC is applied to compute the moments and eigenvalue distributions for these models.

We find that all three methods have very close agreement and identified the critical coupling \eqref{critt2}, at which the $(1,0)$ model enjoys a phase transition from symmetric one-cut to asymmetric two-cut solutions. In addition to confirming the existence of a stable asymmetric regime for $(1,0)$ model, we have demonstrated the ability of the bootstrap method to correctly identify asymmetric solutions.

Future work will aim to apply the latter two methodologies to more complicated Dirac ensembles that correspond to multi-matrix models. Bootstrapping the type $(2,0)$, $(1,1)$, and $(0,2)$ quartic models was done in \cite{hessam2022bootstrapping}, however, only symmetric solutions were studied. We expect asymmetric solutions to these models' SDE to arise when not all odd moments are restricted to zero. Additionally, it would be interesting to apply these methods to Dirac ensembles with fermions, for which very little is known from an analytic perspective and nothing from a numerical perspective at the time of writing this article.

Finally, identification of the eigenvalue density of matrix $H$ is just the first step. The main appeal of the Dirac ensembles is the fact that they represent toy models of well-defined path integrals over geometries. It will therefore be desirable to understand the consequences of these asymmetric phase transitions for changes in the spectral geometry of the underlying fuzzy spaces.

\section*{Acknowledgements}
M. Khalkhali and N. Pagliaroli acknowledge the support of the Natural Sciences and Engineering Research Council of Canada (NSERC). The work of B. Bukor, S. Kov\'a\v cik and J. Tekel has been supported by project VEGA 1/0604/26 \emph{Quantum structures of spacetime}. We are grateful to Patrik Rusnák for his valuable help with the implementation and development of the bootstrap reconstruction method.

\section*{Data availability}
The code used to reproduce the figures and computations seen throughout the article will be shared upon request. 

\newpage

\appendix
\section{One-cut solutions}\label{AppA}
We present the proof of Theorem \ref{thm: one-cut density}, giving the form of all possible equilibrium measures with continuous density functions supported on a single interval.

\begin{proof}
Define the resolvent moment generating function as 
 $$W(x) = \int \frac{\rho(y)}{x-y}dy = \sum_{n=0}^{\infty}\frac{m_{n}}{x^{n+1}}.$$

 We can apply the Zhukovsky transform $$x(z) = \alpha + \gamma\left(z+\frac{1}{z}\right),$$ with $\alpha =\frac{a+b}{2}$ and $\gamma =\frac{b-a}{4}$. This change of variables allows us to write the resolvent as a Laurent polynomial in $z$: 
	\begin{equation*}
		W(x(z)) = \sum_{n=0}^{3}u_{n}z^{-n},
	\end{equation*}
	where
	\begin{align*}
		0 &=u_{0}= 4 \epsilon m_1 t_2 + 4 t_2 \alpha + 24 m_2 t_4 \alpha + 
 8 \epsilon m_3 t_4 \alpha + 24 \epsilon m_1 t_4 \alpha^2 + 8 t_4 \alpha^3 + 
 48 \epsilon m_1 t_4 \gamma^2 + 48 t_4 \alpha \gamma^2\\
		\frac{1}{\gamma}&= u_{1} = 4 t_2 \gamma + 24 m_2 t_4 \gamma + 8 \epsilon m_3 t_4 \gamma + 
 48 \epsilon m_1 t_4 \alpha \gamma + 24 t_4 \alpha^2 \gamma + 
 24 t_4 \gamma^3\\
		u_2 &= 24 \epsilon m_1 t_4 \gamma^2 + 24 t_4 \alpha \gamma^2\\
		u_{3} &= 8 t_{4}\gamma^3.
	\end{align*}
	The resolvent is equal to 
	\begin{equation*}
		W(x) =\frac{1}{2}\left(\frac{1}{2}Q(x)+ M(x)\sqrt{(x-a)(b-x)} \right),
	\end{equation*}
	where 
	\begin{align*}
		M(x) &= \frac{1}{\gamma}\sum_{n=1}^{3}u_{n}U_{n-1}\left(\frac{x-\alpha}{2 \gamma}\right)\\
 &= \frac{1}{\gamma^2} + 24 \epsilon m_{1} t_{4} (x-\alpha) + 8 t_{4} (x^2 + x \alpha - 2 \alpha^2 -\gamma^2),
	\end{align*}
	where $U_{n}(x)$ denotes the $n$-th Chebyshev polynomial of the second kind. For details on this transformation and simplification see Chapter 3 of \cite{eynard2016counting}.

	The moments can also be computed using the Zhukovsky transform via a change of variables
	\begin{align}\label{eq:residue formula for moments one-cut}
		m_{\ell} &= - \text{Res}_{x\rightarrow \infty} x^{\ell} W(x) dx\\
		&= - \text{Res}_{z\rightarrow \infty} \left(\alpha + \gamma\left(z+\frac{1}{z}\right) \right)^{\ell} W(x(z)) \left( \gamma\left(1-\frac{1}{z^2}\right)\right)dz\\
		&= \sum_{0\leq i,j\leq\ell} \frac{\ell!}{j! i!(\ell-i-j)!}\alpha^{\ell-i-j}\gamma^{i+j+1}(u_{i-j+1}-u_{i-j-1}).
	\end{align}
	
	Thus, we have:
	\begin{align*}
		m_{1} &=\alpha + 24 \epsilon m_1 t_4 \gamma^4 + 24 t_4 \alpha \gamma^4\\
		m_2 &=\gamma ^2+48 \alpha \gamma ^4 \epsilon m_{1} t_{4}+\alpha ^2 \left(48 \gamma ^4
 t_{4}+1\right)+8 \gamma ^6 t_{4}\\
		m_3 & =72 \alpha ^2 \gamma ^4 \epsilon m_{1} t_{4}+48 \gamma ^6 \epsilon m_{1}
 t_{4}+\alpha ^3 \left(72 \gamma ^4 t_{4}+1\right)+3 \alpha \left(\gamma ^2+24 \gamma ^6
 t_{4}\right).\\
	\end{align*}
	The first moment can then be solved for as
	\begin{equation}\label{eq:m1 one-cut}
		m_1 = \alpha \left(\frac{1+24 t_{4}\gamma^4}{1-24 t_{4}\epsilon\gamma^4}\right),
	\end{equation}
	subsequently allowing us to write
	\begin{align}\label{eq:m2 one-cut}
		m_{2} &=\gamma ^2-\frac{\alpha ^2 \left(24 \gamma ^4 (\epsilon+2) t_{4}+1\right)}{24 \gamma ^4
 \epsilon t_{4}-1}+8 \gamma ^6 t_{4}
 \end{align}
 \begin{align}\label{eq:m3 one-cut}
		m_3 &=\frac{3 \alpha \gamma ^2 \left(24 \gamma ^4 t_{4}+1\right) \left(8 \gamma ^4 \epsilon
 t_{4}-1\right)-\alpha ^3 \left(24 \gamma ^4 (2 \epsilon+3) t_{4}+1\right)}{24 \gamma ^4
 \epsilon t_{4}-1}
 \end{align}
 \begin{align}\label{eq:m4 one-cut}
 m_{4} &=\frac{6 \alpha ^2 \gamma ^2 \left(192 \gamma ^8 \epsilon t_{4}^2-8 \gamma ^4 (\epsilon+5)
 t_{4}-1\right)-\left(\alpha ^4 \left(24 \gamma ^4 (3 \epsilon+4) t_{4}+1\right)\right)+2
 \gamma ^4 \left(12 \gamma ^4 t_{4}+1\right) \left(24 \gamma ^4 \epsilon
 t_{4}-1\right)}{24 \gamma ^4 \epsilon t_{4}-1}.
	\end{align}
 
	We can now rewrite the two initial conditions coming from $u_0$ and $u_1$ as 
 \begin{align}\label{eq:1-cut condition 2}
 \begin{split}
		0&=\frac{4 \alpha}{1-24 \gamma ^4 \epsilon t_{4}} \left((\epsilon+1) t_{2}+2 t_{4} \left(3 \gamma ^2 \left(-2 \epsilon
 \left(96 \gamma ^8 t_{4}^2+12 \gamma ^4 t_{4}-1\right)+8 \gamma ^4
 t_{4}+3\right) \right.\right.\\
 &\left.\left.+\alpha ^3 \epsilon \left(24 \gamma ^4 (2 \epsilon+3)
 t_{4}+1\right)+\alpha ^2 \left(3 \epsilon \left(40 \gamma ^4 t_{4}+1\right)+144 \gamma
 ^4 t_{4}+4\right)-3 \alpha \gamma ^2 \epsilon \left(24 \gamma ^4 t_{4}+1\right) \left(8
 \gamma ^4 \epsilon t_{4}-1\right)\right)\right)
 \end{split}
	\end{align}
 and
	\begin{align}\label{eq:1-cut condition 1}
 \begin{split}
		1&= \frac{4 \gamma^2}{24 \gamma ^4 \epsilon
 t_{4}-1} \left(2 t_{4} \left(-\left(\alpha ^3 \left(24 \gamma ^4 \epsilon (2
 \epsilon+3) t_{4}+\epsilon\right)\right)-6 \alpha ^2 (\epsilon+1) \left(24 \gamma ^4
 t_{4}+1\right)\right. \right.\\
 &\left. \left.+3 \alpha \gamma ^2 \epsilon \left(24 \gamma ^4 t_{4}+1\right) \left(8
 \gamma ^4 \epsilon t_{4}-1\right)+6 \gamma ^2 \left(4 \gamma ^4 t_{4}+1\right) \left(24
 \gamma ^4 \epsilon t_{4}-1\right)\right)+t_{2}(24 \gamma ^4 \epsilon
 t_{4}-1)\right),
 \end{split}
	\end{align}
	respectively.
\end{proof}

\section{Two-cut solutions}\label{AppB}
Consider the continuous density function for the equilibrium measure with support $[a_{1},b_{1}]\cup[a_{2},b_{2}]$:
\begin{equation*}
 \rho(x) = \frac{1}{2 \pi} \left(24 \epsilon t_4 m_{1} +8 t_4 x + 4t_4 (a_{1}+a_{2}+b_{1}+b_{2})\right)\text{cut}(x)\sqrt{-(x-a_{1})(x-a_{2})(x-b_{1})(x-b_{2})}_{[a_{1},b_{1}]\cup[a_{2},b_{2}]},
\end{equation*}
where 
\begin{equation*}
 \operatorname{cut}(x)= \begin{cases}-1 & x \in\left[a_1, b_1\right] \\ +1 & x \in\left[a_2, b_2\right]\end{cases}.
\end{equation*}
From the discussion in Section \ref{sec:RH} we know that we have the following constraints:
\begin{align}\label{eq:2-cut C1}
 0 &= \int_{\text{supp}(\rho)} \frac{Q(s)}{(\sqrt{q(z)})_{+}}ds
 \end{align}
 \begin{align}\label{eq:2-cut C2}
 0 &= \int_{\text{supp}(\rho)} \frac{Q(s)}{(\sqrt{q(z)})_{+}}sds
 \end{align}
 \begin{align}\label{eq:2-cut C3}
 1 &= \frac{i}{2 \pi}\int_{\text{supp}(\rho)} \frac{Q(s)}{(\sqrt{q(z)})_{+}}s^{2}ds
 \end{align}
 \begin{align}\label{eq:2-cut C4}
 0 &= \int_{\text{supp}(\rho)} \ln\frac{|a_{2}-y|}{|b_{1}-y|}\rho(y) dy-\frac{1}{2}
 \int_{b_1}^{a_2} Q(x) dx 
 \end{align}
 \begin{align}\label{eq:2-cut C5}
 m_{1} &= \int_{\text{supp}(\rho)} x \rho(x) dx
 \end{align}
 \begin{align}\label{eq:2-cut C6}
 m_{2} &= \int_{\text{supp}(\rho)} x^2 \rho(x) dx
 \end{align}
 \begin{align}\label{eq:2-cut C7}
 m_{3} &= \int_{\text{supp}(\rho)} x^3 \rho(x) dx.
\end{align}
We note that in \cite{d2026symmetry} a different off-support condition was found using Lagrange multipliers instead of condition \eqref{eq:2-cut C4}.

Using contour integration, we can write 
\begin{equation*}
 \int_{\text{supp}(\rho)} \frac{\frac{i}{\pi}Q(s)}{(\sqrt{q(z)})_{+}}s^{\alpha}ds = -\pi i\res_{s =0}\frac{Q\left(\frac{1}{s}\right)}{s^{2+\alpha} \sqrt{q\left(\frac{1}{s}\right)}}. 
\end{equation*}
The Laurent expansion is computed as 
\begin{equation*}
 \frac{Q\left(\frac{1}{s}\right)}{s^{2+\alpha} \sqrt{q\left(\frac{1}{s}\right)}} = \sum_{i_{1},i_{2},j_{1},j_{2} =0}^\infty {\frac{1}{2} \choose i_{1}} {\frac{1}{2} \choose i_{2}} {\frac{1}{2} \choose j_{1}} {\frac{1}{2} \choose j_{2}}(-1)^{i_{1} +i_{2}+j_{1}+j_{2}}a_{1}^{i_{1}}a_{2}^{i_{2}}b_{1}^{j_{1}}b_{2}^{j_{2}} s^{-\alpha+i_{1} +i_{2}+j_{1}+j_{2}}Q\left(\frac{1}{s}\right),
\end{equation*}
where 
\begin{equation*}
 Q\left(\frac{1}{s}\right) = \frac{4 t_{4}}{s^3}+ 12 \frac{\epsilon t_4 m_{1}}{s^2} + \frac{2 t_{2} + 12 t_{4}m_{2}}{s} + \epsilon( 2t_{2} m_{1} + 4 t_{4}m_{3}).
\end{equation*}
The first three conditions \eqref{eq:2-cut C1}, \eqref{eq:2-cut C2}, and \eqref{eq:2-cut C3} can be simplified to
\begin{align*}
\begin{split}
 0&=6 (a_1 + a_2 + b_1 + b_2) \epsilon m_1 + 2 t_2 \\
 &+ 
 \frac{t_4}{2} (3 a_1^2 + 2 a_1 a_2 + 3 a_2^2 + 2 a_1 b_1 + 2 a_2 b1 + 3 b_1^2 + 
 2 a_1 b_2 + 2 a_2 b_2 + 2 b_1 b_2 + 3 b_2^2) + 12 m_2 t_4,
 \end{split}
\end{align*}

\begin{align*}
 \begin{split}
 0&=\frac{1}{4} \big(6 \epsilon m_{1} \left(3 a_{1}^2+2 a_{1}
 (a_{2}+b_{1}+b_{2})+3 a_{2}^2+2 a_{2} (b_{1}+b_{2})+3 b_{1}^2+2
 b_{1} b_{2}+3 b_{2}^2\right)\\
 &+t_{4} \left(5 a_{1}^3+3 a_{1}^2
 (a_{2}+b_{1}+b_{2})+a_{1} \left(3 a_{2}^2+2 a_{2} (b_{1}+b_{2})+3
 b_{1}^2+2 b_{1} b_{2}+3 b_{2}^2\right)\right.\\
 &\left.+5 a_{2}^3+3 a_{2}^2
 (b_{1}+b_{2})+a_{2} \left(3 b_{1}^2+2 b_{1} b_{2}+3 b_{2}^2\right)+5
 b_{1}^3+3 b_{1}^2 b_{2}+3 b_{1} b_{2}^2+5 b_{2}^3\right)\\
 &+4 (6 m_{2}
 t_{4}+t_{2}) (a_{1}+a_{2}+b_{1}+b_{2})+8 \epsilon (m_{1} t_{2}+2
 m_{3} t_{4})\big),
 \end{split}
\end{align*}
and
\begin{align*}
 \begin{split}
 1&= 8 (6 m_{2} t_{4}+t_{2}) \left(3 a_{1}^2+2 a_{1} (a_{2}+b_{1}+b_{2})+3
 a_{2}^2+2 a_{2} (b_{1}+b_{2})+3 b_{1}^2+2 b_{1} b_{2}+3
 b_{2}^2\right)\\
 &+24 \epsilon m_{1} (5 a_{1}^3+3 a_{1}^2
 (a_{2}+b_{1}+b_{2})+a_{1} \left(3 a_{2}^2+2 a_{2} (b_{1}+b_{2})+3
 b_{1}^2+2 b_{1} b_{2}+3 b_{2}^2\right)\\
 &+5 a_{2}^3+3 a_{2}^2
 (b_{1}+b_{2})+a_{2} \left(3 b_{1}^2+2 b_{1} b_{2}+3 b_{2}^2\right)+5
 b_{1}^3+3 b_{1}^2 b_{2}+3 b_{1} b_{2}^2+5 b_{2}^3)\\
 &+t_{4} \big(35
 a_{1}^4+20 a_{1}^3 (a_{2}+b_{1}+b_{2})+6 a_{1}^2 \left(3 a_{2}^2+2
 a_{2} (b_{1}+b_{2})+3 b_{1}^2+2 b_{1} b_{2}+3 b_{2}^2\right)\\
&+4 a_{1}
 \left(5 a_{2}^3+3 a_{2}^2 (b_{1}+b_{2})+a_{2} \left(3 b_{1}^2+2 b_{1}
 b_{2}+3 b_{2}^2\right)+5 b_{1}^3+3 b_{1}^2 b_{2}+3 b_{1} b_{2}^2+5
 b_{2}^3\right)\\
 &+35 a_{2}^4+20 a_{2}^3 (b_{1}+b_{2})+6 a_{2}^2 \left(3
 b_{1}^2+2 b_{1} b_{2}+3 b_{2}^2\right)+4 a_{2} \left(5 b_{1}^3+3 b_{1}^2
 b_{2}+3 b_{1} b_{2}^2+5 b_{2}^3\right)\\
 &+35 b_{1}^4+20 b_{1}^3 b_{2}+18
 b_{1}^2 b_{2}^2+20 b_{1} b_{2}^3+35 b_{2}^4\big)+32 \epsilon 
 (a_{1}+a_{2}+b_{1}+b_{2}) (m_{1} t_{2}+2 m_{3} t_{4}).
 \end{split}
\end{align*}

The last three conditions \eqref{eq:2-cut C5}, \eqref{eq:2-cut C6}, and \eqref{eq:2-cut C7} can be simplified by computing moments
\begin{align*}
 \int x^{n} \rho(x) dx &= -\frac{1}{2 \pi i}\oint z^{n} W(z)dz\\
 &=-\res_{z \rightarrow \infty}[z^{-n-2} \rho(z^{-1})],
\end{align*}
where the contour is a large circle that encompasses the support of $\rho$. By expanding 
\begin{equation*}
 z^{-n-5}\sqrt{(1-a_{1}z)(1-a_{2}z)(1-b_{1}z)(1-b_{2}z)} = \sum_{i_{1},i_{2},i_{3},i_{4}=1} {\frac{1}{2}\choose i_{1}}{\frac{1}{2}\choose i_{2}}{\frac{1}{2}\choose i_{3}}{\frac{1}{2}\choose i_{4}} (-z)^{i_{1}+i_{2}+i_{3}+i_{4} -n-5},
\end{equation*}
these residues can be computed.
For example,
\begin{equation}
 \begin{split}
 m_{1}&= -\frac{1}{32} (6 a_{1}^5 - 2 a_{1}^4 a_{2} - 4 a_{1}^3 a_{2}^2 - 4 a_{1}^2 a_{2}^3 - 
 2 a_{1} a_{2}^4 + 6 a_{2}^5 - 2 a_{1}^4 b_{1} + 4 a_{1}^2 a_{2}^2 b_{1} - 2 a_{2}^4 b_{1} - 
 4 a_{1}^3 b_{1}^2 \\
 &+ 4 a_{1}^2 a_{2} b_{1}^2 + 4 a_{1} a_{2}^2 b_{1}^2 - 4 a_{2}^3 b_{1}^2
 - 
 4 a_{1}^2 b_{1}^3 - 4 a_{2}^2 b_{1}^3 - 2 a_{1} b_{1}^4 - 2 a_{2} b_{1}^4 + 6 b_{1}^5
 - 
 2 a_{1}^4 b_{2} + 4 a_{1}^2 a_{2}^2 b_{2} - 2 a_{2}^4 b_{2} + 4 a_{1}^2 b_{1}^2 b_{2}\\
 &+ 
 4 a_{2}^2 b_{1}^2 b_{2} - 2 b_{1}^4 b_{2} 
 - 4 a_{1}^3 b_{2}^2 + 4 a_{1}^2 a_{2} b_{2}^2
 + 
 4 a_{1} a_{2}^2 b_{2}^2 - 4 a_{2}^3 b_{2}^2 + 4 a_{1}^2 b_{1} b_{2}^2 + 4 a_{2}^2 b_{1} b_{2}^2 + 
 4 a_{1} b_{1}^2 b_{2}^2 + 4 a_{2} b_{1}^2 b_{2}^2 - 4 b_{1}^3 b_{2}^2 - 4 a_{1}^2 b_{2}^3\\
 &- 
 4 a_{2}^2 b_{2}^3 - 4 b_{1}^2 b_{2}^3 - 2 a_{1} b_{2}^4 - 2 a_{2} b_{2}^4 - 2 b_{1} b_{2}^4 + 
 6 b_{2}^5 + 15 a_{1}^4 \epsilon m_{1} - 12 a_{1}^3 a_{2} \epsilon m_{1} - 6 a_{1}^2 a_{2}^2 \epsilon m_{1} \\
 &- 
 12 a_{1} a_{2}^3 \epsilon m_{1} + 15 a_{2}^4 \epsilon m_{1} - 12 a_{1}^3 b_{1} \epsilon m_{1} 
 + 
 12 a_{1}^2 a_{2} b_{1} \epsilon m_{1} + 12 a_{1} a_{2}^2 b_{1} \epsilon m_{1} - 12 a_{2}^3 b_{1} \epsilon m_{1} - 
 6 a_{1}^2 b_{1}^2 \epsilon m_{1}\\
 &+ 12 a_{1} a_{2} b_{1}^2 \epsilon m_{1} - 6 a_{2}^2 b_{1}^2 \epsilon m_{1}- 
 12 a_{1} b_{1}^3 \epsilon m_{1} - 12 a_{2} b_{1}^3 \epsilon m_{1} + 15 b_{1}^4 \epsilon m_{1} - 
 12 a_{1}^3 b_{2} \epsilon m_{1}\\
 &+ 12 a_{1}^2 a_{2} b_{2} \epsilon m_{1} + 12 a_{1} a_{2}^2 b_{2} \epsilon m_{1} - 
 12 a_{2}^3 b_{2} \epsilon m_{1} + 12 a_{1}^2 b_{1} b_{2} \epsilon m_{1} - 24 a_{1} a_{2} b_{1} b_{2} \epsilon m_{1} + 
 12 a_{2}^2 b_{1} b_{2} \epsilon m_{1}\\
 &+ 12 a_{1} b_{1}^2 b_{2} \epsilon m_{1} + 12 a_{2} b_{1}^2 b_{2} \epsilon m_{1}- 
 12 b_{1}^3 b_{2} \epsilon m_{1} - 6 a_{1}^2 b_{2}^2 \epsilon m_{1} + 12 a_{1} a_{2} b_{2}^2 \epsilon m_{1}\\
 &- 
 6 a_{2}^2 b_{2}^2 \epsilon m_{1} + 12 a_{1} b_{1} b_{2}^2 \epsilon m_{1} + 12 a_{2} b_{1} b_{2}^2 \epsilon m_{1} - 
 6 b_{1}^2 b_{2}^2 \epsilon m_{1} - 12 a_{1} b_{2}^3 \epsilon m_{1} - 12 a_{2} b_{2}^3 \epsilon m_{1} - 
 12 b_{1} b_{2}^3 \epsilon m_{1} + 15 b_{2}^4 \epsilon m_{1}).
 \end{split}
\end{equation}
This first moment condition allows us to isolate for $m_{1}$ strictly in terms of the limit points
\begin{equation}\label{eq:2-cut m1}
 \resizebox{1.05\hsize}{!}{$m_{1}=-\frac{2 \left(-3 a_{1}^5+a_{1}^4 (a_{2}+b_{1}+b_{2})+2 a_{1}^3
 \left(a_{2}^2+b_{1}^2+b_{2}^2\right)+2 a_{1}^2 \left(a_{2}^3-a_{2}^2
 (b_{1}+b_{2})-a_{2} \left(b_{1}^2+b_{2}^2\right)+(b_{1}-b_{2})^2
 (b_{1}+b_{2})\right)+a_{1} \left(a_{2}^4-2 a_{2}^2
 \left(b_{1}^2+b_{2}^2\right)+\left(b_{1}^2-b_{2}^2\right)^2\right)-3
 a_{2}^5+a_{2}^4 (b_{1}+b_{2})+2 a_{2}^3 \left(b_{1}^2+b_{2}^2\right)+2
 a_{2}^2 (b_{1}-b_{2})^2 (b_{1}+b_{2})+a_{2}
 \left(b_{1}^2-b_{2}^2\right)^2-(b_{1}-b_{2})^2 \left(3 b_{1}^3+5 b_{1}^2
 b_{2}+5 b_{1} b_{2}^2+3 b_{2}^3\right)\right)}{-15 a_{1}^4+12 a_{1}^3
 (a_{2}+b_{1}+b_{2})+6 a_{1}^2 \left(a_{2}^2-2 a_{2}
 (b_{1}+b_{2})+(b_{1}-b_{2})^2\right)+12 a_{1} \left(a_{2}^3-a_{2}^2
 (b_{1}+b_{2})-a_{2} (b_{1}-b_{2})^2+(b_{1}-b_{2})^2
 (b_{1}+b_{2})\right)-15 a_{2}^4+12 a_{2}^3 (b_{1}+b_{2})+6 a_{2}^2
 (b_{1}-b_{2})^2+12 a_{2} (b_{1}-b_{2})^2 (b_{1}+b_{2})-15
 b_{1}^4+12 b_{1}^3 b_{2}+6 b_{1}^2 b_{2}^2+12 b_{1} b_{2}^3-15
 b_{2}^4+32}.$}
\end{equation}

The last two conditions are in terms of the limit points and $m_{1}$, hence, we can write all three moments in terms of the limit points in this manner. With all moments in terms of the limit points, the equations \eqref{eq:2-cut C1},\eqref{eq:2-cut C2},
\eqref{eq:2-cut C3}, and \eqref{eq:2-cut C4} give us four equations with four unknowns. Explicit solutions of this system can be found, but, with the exception of the symmetric solution, are extremely cumbersome to work with. For example, the first 
condition \eqref{eq:2-cut C1} gives us $m_{2}$ in terms of the limit points and $m_{1}$:
\begin{align}
\begin{split}
 m_{2}&= \frac{1}{24 t_4}(-12 a_1 m_1 - 12 a_2 m_1 - 12 b_1 m_1 - 12 b_2 m_1 - 4 t_2 - 
 3 a_1^2 t_4 \\
 &- 2 a_1 a_2 t_4 - 3 a_2^2 t_4 - 2 a_1 b_1 t_4 - 2 a_2 b_1 t_4 - 
 3 b_1^2 t_4 - 2 a_1 b_2 t_4 - 2 a_2 b_2 t_4 - 2 b_1 b_2 t_4 - 3 b_2^2 t_4),
 \end{split}
\end{align}
where $m_{1}$ is as in \eqref{eq:2-cut m1}. 

Assuming $\text{supp}(\rho) = [-b,-a]\cup [a,b]$, conditions \eqref{eq:2-cut C1}, \eqref{eq:2-cut C3}, and \eqref{eq:2-cut C6} become 
\begin{align*}
 4 t_{2} + 24 m_{2} t_{4} + 4 t_{4} (a^2 + b^2) =0,
\end{align*}
\begin{equation*}
 3 t_{4}(a^4 + b^4 ) + 2 b^2( t_{2} + 6m_{2} t_{4}) + 2 a^2 ( t_{2} + (b^2 + 6 m_{2})t_{4}) =2,
\end{equation*}
and 
\begin{equation*}
 m_{2} =-\frac{1}{8 t_{4}}\left(4 t_{2} + 24 m_{2} t_{4}\right).
\end{equation*}
The last condition can be solved to arrive at 
$$m_{2} =-\frac{t_{2}}{8t_{4}}.$$

The first two conditions here can then be solved to show
\begin{align*}
\begin{split}
\rho(x) = \frac{4t_{4}}{\pi}|x| \sqrt{(x^{2}-a^{2})(b^{2}-x^{2})}_{[-b,-a]\cup [b,a]},
\end{split}
\end{align*}
where the support $[-b,-a]\cup [b,a]$ is
\begin{equation*}
a^{2}= \frac{-t_{2}+ 4 \sqrt{2t_{4}}}{8t_{4}},
\end{equation*}
and 
\begin{equation*}
b^{2}= -\frac{t_{2}+ 4 \sqrt{2t_{4}}}{8t_{4}},
\end{equation*}
for $t_{4}\geq 0$.
\section{The one-cut free energy computation}\label{appC}
Here we present the computationally heavy proof of Corollary \ref{cor: sym free energy}.
\begin{proof}
    The integral dramatically simplifies if we use the saddle point equation to write it in terms of the entropy of $\rho$. Consider the free energy functional
\begin{equation*}
 F = \int \int(U(x,y) - \ln|x-y|)\rho(x)\rho(y)dx dy. 
\end{equation*}
Integrating the saddle point equation, we have that 
\begin{equation*}
 \int_{\text{supp}( \rho)}( \ln|x-y|) \rho(y)dy = \int_{\text{supp}( \rho)} \ln|s| \rho(s)ds +\frac{1}{2}\int_{\text{supp}( \rho)} U(x,y)\rho(y)dy,
\end{equation*}
which can be combined to give us
\begin{equation*}
 F = -\int_{\text{supp}( \rho)} \ln|s| \rho(s)ds + \frac{1}{2}\int_{\text{supp}( \rho)}\int_{\text{supp}( \rho)} U(x,y)\rho(x)\rho(y)dx dy.
\end{equation*}
The latter integral can be computed in terms of the moments of $\rho$ since $U(x,y)$ is a polynomial. 

For $\ell\geq 0$, define
$$L_{\ell} :=\int_{-2\gamma}^{2\gamma} \ln|x|x^{\ell} \sqrt{4\gamma^2-x}.$$
With the substitution $x \rightarrow 2 \gamma \cos \theta$, this integral becomes
\begin{equation*}
    L_{\ell} =2 \gamma \int_{0}^{\pi} \ln |2 \gamma \cos \theta| (2 \gamma \cos \theta)^{\ell}(1 - \cos2 \theta) d\theta.
\end{equation*}
We may Fourier series expand the integrand,
\begin{equation*}
    \ln |2 \gamma \cos \theta| = \ln \gamma + \sum_{n=1}^{\infty} \frac{(-1)^{n+1}}{n} \cos (2 n \theta)
\end{equation*}
which has a finite even Fourier expansion, so we substitute it back into the integral. We find then that 
\begin{align*}
    L_{0} &= \pi \gamma^2 (2 \ln |\gamma| -1),
    \end{align*}
and
    \begin{align*}
    L_{2}&= \pi \gamma^2 \left(2 \ln \gamma + \frac{1}{2}\right).
\end{align*}
Then applying the formula for $\rho(x)$ from Theorem \ref{thm: one-cut density}, we have that 
\begin{align*}
\int_{-2\gamma}^{2 \gamma}\ln |x| \rho(x)dx&= \frac{1}{2\pi}\left(8 t_{4} L_{2} + \left( \frac{1}{\gamma^2} - 8 t_{4} \gamma^2\right)L_{0} \right)\\
&=\ln\gamma + 6 t_{4} \gamma^{4} -\frac{1}{2}. 
\end{align*}
Lastly, to compute the integral of $U(x,y)\rho(x)\rho(y)$ we can use equations \eqref{eq:m2 one-cut} for the second moment and compute $m_{4}$ in a similar manner using \eqref{eq:residue formula for moments one-cut}.

For the symmetric 2-cut case, the same approach works but one requires the substitution $a^2 = s-t$ and $b^2 = s+t$, and take $x \rightarrow s+ t \cos \theta$. Almost the exact transformation is used in the next proof.

\end{proof}

We have an included the analogous computation for an asymmetric case.
\begin{prop}\label{free energy theorem}
 	The free energy for a one-cut solution of the type $(1,0)$ quartic Dirac ensemble, assuming $0<a<b$ is of the form 
 \begin{equation}
 I[\rho_{\textit{1-cut}}] = E(t_{2},t_{4}) +2t_{4}(m_{4} + 3 m_{2}^2 + 4 \epsilon m_{1}m_{3}) + 2t_{2}(m_{2}+ \epsilon m_{1}^2),
 \end{equation}
 where
 \begin{itemize}
 \item the first term is \begin{align*}&E(t_{2},t_{4})=
\gamma^2\left( 8 t_{4}\alpha^2+B\alpha+C\right)
\left(
\ln\left(\frac{q}{2}\right)+\frac{2\gamma^2}{q^2}
\right)
+
\frac{2\gamma^4(16 t_{4}\alpha+B)}{q}
\left(
1-\frac{4\gamma^2}{3q^2}
\right)
+
8 t_{4}\gamma^4
\left(
\ln\left(\frac{q}{2}\right)+\frac{4\gamma^4}{q^4}
\right),
 \end{align*}
 where $q = \alpha + \sqrt{\alpha^2 - 4 \gamma^2}$,
 \begin{align*}
 B& = 8 \alpha t_{4}-\frac{24 \epsilon t_{4} \left(\alpha +24 \alpha \gamma ^4 t_{4}\right)}{24 \gamma ^4 \epsilon
 t_{4}-1},
 \end{align*}
 and
 \begin{align*}
 C & = \frac{1}{\gamma ^2}+\frac{24 \alpha \epsilon t_{4} \left(\alpha +24 \alpha \gamma ^4 t_{4}\right)}{24 \gamma ^4
 \epsilon t_{4}-1}-16 \alpha ^2 t_{4}-8 \gamma ^2 t_{4}.
 \end{align*}
 \item $m_{1},m_{2},$ $m_{3},$ and $m_{4}$ are as in (\ref{eq:m1 one-cut}-\ref{eq:m4 one-cut}) 
 \item $\gamma = \frac{b-a}{4}$ and $\alpha=\frac{a+b}{2}$, and are the solutions of equations \eqref{eq:1-cut condition 1} and \eqref{eq:1-cut condition 2}.
 \end{itemize}
\end{prop}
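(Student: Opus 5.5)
The plan mirrors the symmetric computation in the proof of Corollary \ref{cor: sym free energy}. First, integrate the saddle point equation to replace the double logarithmic integral by a single one. On the support the effective potential $\int \ln|x-y|\rho(y)\,dy - \frac12\int U(x,y)\rho(y)\,dy$ is constant. Integrating against $\rho(x)$ gives
\[
I[\rho] = -\,\mathrm{const} + \tfrac12\iint U(x,y)\rho(x)\rho(y)\,dx\,dy .
\]
Since $0<a<b$, the point $0$ lies to the left of the support. The convenient choice is therefore to evaluate the constant at the edge $x=a$, or equivalently to compare with $\ln|x|$ via the resolvent, which leads to an expression involving $\int \ln|s|\rho(s)\,ds$ together with boundary terms.

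Second, compute the polynomial part. Expanding $U$ and using $\int\rho=1$ gives
\[
\iint U\rho\rho = 4t_4 m_4 + 12 t_4 m_2^2 + 16\epsilon t_4 m_1 m_3 + 4t_2 m_2 + 4\epsilon t_2 m_1^2,
\]
where the $\delta_{(0,1)}$ term is absent for type $(1,0)$. One half of this is exactly $2t_4(m_4+3m_2^2+4\epsilon m_1 m_3)+2t_2(m_2+\epsilon m_1^2)$. The moments $m_1,\dots,m_4$ are supplied by \eqref{eq:m1 one-cut}--\eqref{eq:m4 one-cut}, with $\alpha,\gamma$ constrained by \eqref{eq:1-cut condition 2} and \eqref{eq:1-cut condition 1}. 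All remaining (logarithmic) contributions then get collected into $E(t_2,t_4)$.

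Third, evaluate the logarithmic integral with the density of Theorem \ref{thm: one-cut density}. Write the prefactor as $8t_4 x^2 + Bx + C$; the identification with the stated $B$ and $C$ is immediate from expanding the polynomial in the theorem. Substitute $x=\alpha+2\gamma\cos\theta$, so that $\sqrt{(x-a)(b-x)}=2\gamma\sin\theta$ and $dx=-2\gamma\sin\theta\,d\theta$. Because $\alpha>2\gamma$, use the factorization $\alpha+2\gamma\cos\theta = \frac{q}{2}\,|1+re^{i\theta}|^2$ with $r=2\gamma/q$ and $q=\alpha+\sqrt{\alpha^2-4\gamma^2}$. This yields the Fourier series
\[
\ln(\alpha+2\gamma\cos\theta)=\ln\frac q2 + 2\sum_{n\ge1}\frac{(-1)^{n+1}}{n}r^n\cos n\theta .
\]
Next rewrite the polynomial in the shifted variable, $8t_4(\alpha+2\gamma\cos\theta)^2+B(\alpha+2\gamma\cos\theta)+C$. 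Multiplied by $2\sin^2\theta = 1-\cos2\theta$, this is a trigonometric polynomial of degree at most $4$. By orthogonality only the Fourier modes $n\le4$ survive. This produces $\ln(q/2)$ times $\gamma^2(8t_4\alpha^2+B\alpha+C)+8t_4\gamma^4$, together with powers of $r=2\gamma/q$ up to $r^4$, which matches the displayed structure of $E$: the $2\gamma^2/q^2$, $1-\frac{4\gamma^2}{3q^2}$ and $4\gamma^4/q^4$ terms. Finally, the constant from the first step must be handled, namely the difference between the value of the effective potential on the support and $\ln|x|$ evaluated against $\rho$. This is where the edge evaluation enters, and it must be absorbed consistently with the sign conventions of the symmetric case.

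The main obstacle is the bookkeeping in the third step. One has to expand the degree-two polynomial times $1-\cos2\theta$ into modes, pair each mode with the corresponding Fourier coefficient $\pm\frac{2}{n}r^n$, and check that the coefficients combine into exactly the closed form of $E$. In particular, the $16t_4\alpha+B$ coefficient of the $1/q$ term must arise from the $\cos\theta$ and $\cos3\theta$ modes. I would do this first symbolically in the mode basis, and then sanity-check against the symmetric limit. That limit cannot be taken directly, since $\alpha\to0$ violates $\alpha>2\gamma$. Instead I would check numerically for a sample $(\alpha,\gamma)$ by direct quadrature of $\int\ln|x|\rho$.
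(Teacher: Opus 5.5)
Your Steps 2 and 3 coincide with the paper's proof. There one substitutes $x=s+t\cos\theta$ with $s=\alpha$, $t=2\gamma$, and obtains the cosine coefficients $a_0=2\ln(q/2)$ and $a_n=2(-1)^{n+1}r^n/n$, $r=2\gamma/q$, by a residue computation; your factorization $\alpha+2\gamma\cos\theta=\frac q2|1+re^{i\theta}|^2$ gives them more directly. The paper then keeps the modes $n\le 4$, assembles $\frac{1}{2\pi}(AL_2+BL_1+CL_0)$, and takes the polynomial part from the moments. Carried out, this bookkeeping shows that the displayed $E$ is \emph{exactly} $+\int_a^b \ln x\,\rho(x)\,dx$, term by term. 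In particular, the coefficient of $\ln(q/2)$ is $\gamma^2(8t_4\alpha^2+8t_4\gamma^2+B\alpha+C)=\int\rho=1$.

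That is why your last step fails, and it is a genuine gap rather than bookkeeping. You rightly observe that $0\notin[a,b]$, so the constant in Step 1 is $\Phi(a)$, with $\Phi(x)=\int\ln|x-y|\rho(y)\,dy-\frac12\int U(x,y)\rho(y)\,dy$. Put $W(x)=\int\frac{\rho(y)}{x-y}\,dy$. Then $\Phi(0)=E-t_4m_4-t_2m_2$, and for $x<a$ one has $\Phi'(x)=W(x)-Q(x)=\frac12(8t_4x^2+Bx+C)\sqrt{(a-x)(b-x)}$. So your Step 1 actually yields
\[
I=\tfrac12\iint U\rho\rho-E+t_4m_4+t_2m_2-\tfrac12\int_0^a(8t_4x^2+Bx+C)\sqrt{(a-x)(b-x)}\,dx .
\]
Nothing can be ``absorbed'' into $E$. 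It is fully accounted for by $\int_a^b\ln x\,\rho$, it appears here with the opposite sign, and both $t_4m_4+t_2m_2$ and the edge integral are generically nonzero.

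Equivalently, evaluating directly at $x=a$ gives $I=\frac12\iint U\rho\rho+\frac12\int U(a,y)\rho(y)\,dy-\int_a^b\ln(y-a)\rho(y)\,dy$. Here the last integral is $E$ with $q$ replaced by $2\gamma$. Again this is not the displayed form, so your plan, carried out honestly, does not prove the stated identity.

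The paper's proof gives no argument at this point either. It reuses, ``as in the previous proof'', the symmetric identity $F=-\int\ln|s|\rho(s)\,ds+\frac12\iint U\rho\rho$. That identity was obtained by integrating the saddle-point equation from $x=0$, which lies in the support only in the symmetric case, and even there the term $\frac12\int U(0,y)\rho(y)\,dy$ is omitted. The paper then identifies $E$ with $\frac{1}{2\pi}(AL_2+BL_1+CL_0)$. Your appeal to ``the sign conventions of the symmetric case'' therefore cannot close the gap.
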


\begin{proof}
As in the previous proof, we wish to compute integrals of the form 
\begin{equation*}
 L_{\ell}:= \int_{a}^{b} \ln|x| x^\ell \sqrt{(x-a)(b-x)}dx.
\end{equation*}
for $\ell = 0,1,$ and 2.

Applying the substitution $a=s-t$ and $b= s+t$, we can write the discriminant as $t^2 - (x-s)^2$. Then making the change of variables $x \rightarrow s+t\cos (\theta)$ we have that
\begin{align*}
 L_{\ell}
 &=\frac{t^2}{2}\int_{0}^{\pi} \ln|s+t\cos (\theta)| (s+t\cos (\theta))^\ell \left(1-\cos(2 \theta)\right) d \theta.
\end{align*}
We may write the integrand as a Fourier cosine series 
\begin{equation*}
 \ln|s+ t \cos(\theta)| =  \sum_{n=0}^{\infty} a_{n}\cos( n \theta).
\end{equation*}

To compute the Fourier coefficients we start by mapping the interval $[0,\pi]$ to the unit circle with $z = e^{i\theta}$. Let $\alpha = \frac{-s+\sqrt{s^2-t^2}}{t}$ and $\beta = \frac{-s-\sqrt{s^2-t^2}}{t}= \frac{1}{\alpha}$ denote the roots of the argument of the logarithm after this transformation. 

For the case that $\ell=0$, we have that 
$$
 \frac{t^2}{2}\int_0^\pi\left(\frac{a_0}{2}+\sum_{n \geq 1} a_n \cos (n \theta)\right)\left(1-\cos (2 \theta)\right) d \theta= \frac{\pi t^2}{4}\left(a_0-a_2\right).
$$
Then we may use the Residue Theorem to compute
\begin{align*}
 a_{0} &= \frac{1}{\pi}\int_{0}^{\pi}\ln|s+t \cos(\theta)|d\theta \\
 &=\frac{1}{\pi i} \oint_{|z|=1}\ln\left| s+ \frac{t}{2}\left(z+\frac{1}{z} \right)\right|\frac{dz}{z}\\
 &=2\text{Res}_{z\rightarrow 0}\left[-\frac{1}{z^2}\ln\left( s+ \frac{t}{2}\left(z+\frac{1}{z} \right)\right)\right]\\
 &= 2 \text{Res}_{z\rightarrow 0}\left[\frac{1}{z}\ln \left(\frac{s+\sqrt{s^2-t^2}}{2}\right)-\frac{1}{z}\left(\sum_{n=1}^{\infty} \frac{\alpha^n}{n z^n}+\sum_{n=1}^{\infty} \frac{\alpha^n z^n}{n} \right)\right]\\
 &= 2\ln \left(\frac{s+\sqrt{s^2-t^2}}{2}\right),
\end{align*}
and via the same approach
\begin{align*}
 a_{n} = \frac{2(-1)^{n+1}}{n} r^{n},
\end{align*}
where $r = \frac{t}{s + \sqrt{s^2 -t^2}}$, for $n\geq 1$. Thus
\begin{equation*}
 L_{0} = \frac{\pi t^2}{4} \left( 2\ln \left(\frac{s+\sqrt{s^2-t^2}}{2}\right) +r^2 \right).
\end{equation*}

When $\ell =1$, we have 
\begin{align*}
 L_{1} &=\frac{st^2}{2}\int_{0}^{\pi} \ln|s+t\cos (\theta)| \left(1-\cos(2 \theta)\right) d \theta
 + \frac{t^3}{2}\int_{0}^{\pi} \ln|s+t\cos (\theta)| \cos (\theta)\left(1-\cos(2 \theta)\right) d \theta.
\end{align*}
The Fourier cosine series expansion of the second term is 
$$
 \frac{t^2}{2}\int_0^\pi\left(\frac{a_0}{2}+\sum_{n \geq 1} a_n \cos ( n \theta)\right)\cos(\theta)\left(1-\cos (2\theta)\right) d \theta= \frac{\pi t^2 }{8}(a_{1} -a_{3}).
$$
Hence, 
\begin{equation*}
 L_{1} = s L_{0} + \frac{\pi t^{3}}{4}\left(r - \frac{r^3}{3} \right).
\end{equation*}

When $\ell =2$, we have 
\begin{align*}
 L_{2} &=\frac{t^2s^2}{2}\int_{0}^{\pi} \ln|s+t\cos (\theta)| \left(1-\cos(2 \theta)\right) d \theta +st^3\int_{0}^{\pi} \ln|s+t\cos (\theta)|\cos(\theta) \left(1-\cos(2 \theta)\right) d \theta\\
 &+\frac{t^4}{2}\int_{0}^{\pi} \ln|s+t\cos (\theta)| \cos(\theta)^2\left(1-\cos(2 \theta)\right) d \theta.\\
\end{align*}
The Fourier cosine series for the third integral is 
\begin{equation*}
 \frac{t^{4}}{2}\int_0^\pi\left(\frac{a_0}{2}+\sum_{n \geq 1} a_n \cos (n \theta)\right)\cos(\theta)^2\left(1-\cos (2 \theta)\right) d \theta= \frac{\pi t^{4}}{16}(a_{0}-a_{4}).
\end{equation*}
Hence, 
\begin{equation*}
 L_{2} = s^2 L_{0} + \frac{\pi s t^3}{2}\left( r - \frac{r^3}{3}\right) + \frac{\pi t^{4}}{8}\left(\ln \left(\frac{s+\sqrt{s^2-t^2}}{2}\right) - \frac{r^{4}}{4}\right).
\end{equation*}

For a density 
$\rho(x) = \frac{1}{2\pi}(Ax^2 + Bx + C)\sqrt{(x-a)(b-x)}$, we can compute 
\begin{align*}
 \int_{a}^{b} \ln|x| \rho(x) dx &= \frac{1}{2 \pi}( AL_{2} + B L_{1} + CL_{0}).
\end{align*}
Applying the inverse change of variables $s =\frac{a+b}{2}$ and $t =\frac{b-a}{2}$, and using the appropriate choice of $A$, $B$, and $C$ from Theorem \ref{thm: one-cut density} gives us the formula for $E(t_{2},t_{4})$.

Lastly, to compute the integral of $U(x,y)\rho(x)\rho(y)$ we can use equations \eqref{eq:m1 one-cut},\eqref{eq:m2 one-cut},\eqref{eq:m3 one-cut} for the moments $m_{1},m_{2}$ and $m_{3}$, but we also require $m_{4}$, which can be computed in the same manner using the residue formula \eqref{eq:residue formula for moments one-cut}.
\end{proof}

A similar formula can be computed in the case that $a<0<b$ or $a<b<0$, with slightly different substitutions.
\bibliographystyle{unsrt}
\bibliography{references}
\end{document}